\DeclareMathOperator{\real}{Re}
\DeclareMathOperator{\imag}{Im}
\DeclareMathOperator{\vect}{vec}
\newtheorem{thm}{Theorem}
\newtheorem{prop}[thm]{Proposition}
\newtheorem{cor}[thm]{Corollary}
\newtheorem{lem}[thm]{Lemma}
\title{Polar $n$-Complex and $n$-Bicomplex Singular Value Decomposition and Principal Component Pursuit}
\author{Tak-Shing~T.~Chan,~\IEEEmembership{Member,~IEEE} and Yi-Hsuan~Yang,~\IEEEmembership{Member,~IEEE}%
\thanks{Manuscript received August 26, 2015; revised May 26, 2016 and July 16, 2016; accepted September 3, 2016. Date of publication Month xx, 2016; date of current version September 4, 2016. This work was supported by a grant from the Ministry of Science and Technology under the contract MOST102-2221-E-001-004-MY3 and the Academia Sinica Career Development Program. The associate editor coordinating the review of this manuscript and approving it for publication was Prof.~Masahiro Yukawa.}%
\thanks{The authors are with the Research Center for Information Technology Innovation, Academia Sinica, Taipei 11564, Taiwan (e-mail: takshingchan@citi.sinica.edu.tw; yang@citi.sinica.edu.tw).}%
\thanks{Digital Object Identifier 10.1109/TSP.2016.2612171}}
\begin{document}

\maketitle

\begin{abstract}
Informed by recent work on tensor singular value decomposition and circulant algebra matrices, this paper presents a new theoretical bridge that unifies the hypercomplex and tensor-based approaches to singular value decomposition and robust principal component analysis. We begin our work by extending the principal component pursuit to Olariu's polar $n$-complex numbers as well as their bicomplex counterparts. In so doing, we have derived the polar $n$-complex and $n$-bicomplex proximity operators for both the $\ell_1$- and trace-norm regularizers, which can be used by proximal optimization methods such as the alternating direction method of multipliers. Experimental results on two sets of audio data show that our algebraically-informed formulation outperforms tensor robust principal component analysis. We conclude with the message that an informed definition of the trace norm can bridge the gap between the hypercomplex and tensor-based approaches. Our approach can be seen as a general methodology for generating other principal component pursuit algorithms with proper algebraic structures.
\end{abstract}

\begin{IEEEkeywords}
Hypercomplex, tensors, singular value decomposition, principal component, pursuit algorithms.
\end{IEEEkeywords}

\section{Introduction}
\label{sec:intro}

\IEEEPARstart{T}{he} robust principal component analysis (RPCA) \cite{Candes11} has received a lot of attention lately in many application areas of signal processing \cite{Huang12,Ikemiya15,Peng10,Bouwmans14}. The ideal form of RPCA decomposes the input $\mathbf{X}\in\mathbb{R}^{l \times m}$ into a low-rank matrix $\mathbf{L}$ and a sparse matrix $\mathbf{S}$:
\begin{equation}
\label{eq:1}
\min_{\mathbf{L},\mathbf{S}}\mathrm{rank}(\mathbf{L})+\lambda\|\mathbf{S}\|_0\mbox{\quad s.t.\quad}\mathbf{X}=\mathbf{L}+\mathbf{S},
\end{equation}
where $\|\cdot\|_0$ returns the number of nonzero matrix elements. Owing to the NP-hardness of the above formulation, the principal component pursuit (PCP) \cite{Candes11} has been proposed to solve this relaxed problem instead \cite{Lin09}:
\begin{equation}
\label{eq:2}
\min_{\mathbf{L},\mathbf{S}}\|\mathbf{L}\|_*+\lambda\|\mathbf{S}\|_1\mbox{\quad s.t.\quad}\mathbf{X}=\mathbf{L}+\mathbf{S}\,,
\end{equation}
where $\|\cdot\|_*$ is the trace norm (sum of the singular values), $\|\cdot\|_1$ is the entrywise $\ell_1$-norm, and $\lambda$ can be set to $c/\sqrt{\max(l,m)}$ where $c$ is a positive parameter \cite{Candes11,Huang12}. The trace norm and the $\ell_1$-norm are the tightest convex relaxations of the rank and the $\ell_0$-norm, respectively. Under somewhat general conditions \cite{Candes11}, PCP with $c=1$ has a high probability of exact recovery, though $c$ can be tuned if the conditions are not met.

\IEEEpubidadjcol

Despite its success, one glaring omission from the original PCP is the lack of complex (and hypercomplex) formulations. In numerous signal processing domains, the input phase has a significant meaning. For example in parametric spatial audio, spectrograms have not only spectral phases but inter-channel phases as well. For that reason alone, we have recently extended the PCP to the complex and the quaternionic cases \cite{Chan16}. However, there exists inputs with dimensionality greater than four, such as microphone array data, surveillance video from multiple cameras, or electroencephalogram (EEG) signals, which exceed the capability of quaternions. These signals may instead be represented by $n$-dimensional hypercomplex numbers, defined as \cite{Kantor89}
\begin{equation}
a=a_0+a_1e_1+\cdots+a_{n-1}e_{n-1},
\end{equation}
where $a_0,\ldots,a_{n-1}\in\mathbb{R}$ and $e_1\ldots,e_{n-1}$ are the imaginary units. Products of imaginary units are defined by an arbitrary $(n-1)\times(n-1)$ multiplication table, and multiplication follows the distributive rule \cite{Kantor89}. If we impose the multiplication rules
\begin{equation}
e_ie_j=\begin{cases}
-e_je_i, & i\neq j,\\
-1,0,\mathrm{\,or\,}1, & i=j,
\end{cases}
\end{equation}
and extend the algebra to include all $2^{n-1}$ combinations of imaginary units (formally known as multivectors):
\begin{equation}
\begin{split}
a= & \ a_0\\
& +a_1e_1+a_2e_2+\ldots\\
& +a_{1,2}e_1e_2+a_{1,3}e_1e_3+\ldots\\
& +\ldots\\
& +a_{1,2,\ldots,n-1}e_1e_2\ldots e_{n-1},
\end{split}
\end{equation}
then we have a Clifford algebra \cite{Lounesto01}. For example, the real, complex, and quaternion algebras are all Clifford algebras. Yet previously, Alfsmann \cite{Alfsmann06} suggests two families of $2^N$-dimensional hypercomplex numbers suitable for signal processing and argued for their superiority over Clifford algebras. One family starts from the two-dimensional hyperbolic numbers and the other one starts from the four-dimensional tessarines,\footnote{Hyperbolic numbers are represented by $a_0+a_1j$ where $j^2=1$ and $a_0,a_1\in\mathbb{R}$ \cite{Alfsmann06}. Tessarines are almost identical except that $a_0,a_1\in\mathbb{C}$ \cite{Alfsmann06}.} with dimensionality doubling up from there. Although initially attractive, the $2^N$-dimensional restriction (which also affects Clifford algebras) seems a bit limiting. For instance, if we have 100 channels to process, we are forced to use 128 dimensions (wasting 28). On the other hand, tensors can have arbitrary dimensions, but traditionally they do not possess rich algebraic structures. Fortunately, recent work on the tensor singular value decomposition (SVD) \cite{Kilmer08}, which the authors call the t-SVD, has begun to impose more structures on tensors \cite{Braman10,Kilmer11,Kilmer13}. Furthermore, a tensor PCP formulation based on t-SVD has also been proposed lately \cite{Zhang14}. Most relevantly, Braman \cite{Braman10} has suggested to investigate the relationship between t-SVD and Olariu's \cite{Olariu02} $n$-complex numbers (for arbitrary $n$). This is exactly what we need, yet the actual work is not forthcoming. So we have decided to begin our investigation with Olariu's polar $n$-complex numbers. Of special note is Gleich's work on the circulant algebra \cite{Gleich13}, which is isomorphic to Olariu's polar $n$-complex numbers. This observation simplifies our current work significantly. Nevertheless, the existing tensor PCP \cite{Zhang14} employs an ad hoc tensor nuclear norm, which lacks algebraic validity. So, in this paper, we remedy this gap by formulating the first proper $n$-dimensional PCP algorithm using the polar $n$-complex algebra.

Our contributions in this paper are twofold. First, we have extended PCP to the polar $n$-complex algebra and the polar $n$-bicomplex algebra (defined in Section \ref{sec:n-bicomplex}), via: 1) properly exploiting the circulant isomorphism for the polar $n$-complex numbers; 2) extending the polar $n$-complex algebra to a new polar $n$-bicomplex algebra; and 3) deriving the proximal operators for both the polar $n$-complex and $n$-bicomplex matrices by leveraging the aforementioned isomorphism. Second, we have provided a novel hypercomplex framework for PCP where algebraic structures play a central role.

This paper is organized as follows. In Section \ref{sec:n-complex}, we review polar $n$-complex matrices and their properties. We extend this to the polar $n$-bicomplex case in Section \ref{sec:n-bicomplex}. This leads to the polar $n$-complex and $n$-bicomplex PCP in Section \ref{sec:pcp}. Experiments are conducted in Sections \ref{sec:numerical} and \ref{sec:exp} to justify our approach. We conclude by describing how our work provides a new direction for future work in Section \ref{sec:conc}.

\section{The Polar $n$-Complex Numbers}
\label{sec:n-complex}

In this section we introduce polar $n$-complex matrices and their isomorphisms. These will be required in Section \ref{sec:pcp} for the formulation of polar $n$-complex PCP. Please note that the value of $n$ here does not have to be a power of two.

\subsection{Background}
\label{subsec:n-background}

Olariu's \cite{Olariu02} polar $n$-complex numbers, which we denote by $\mathbb{K}_n$, are $n$-dimensional ($n\geq 2$) extensions of the complex algebra, defined as
\begin{equation}
\label{eq:3}
p=a_0e_0+a_1e_1+\cdots+a_{n-1}e_{n-1}\in\mathbb{K}_n,
\end{equation}
where $a_0,a_1,\ldots,a_{n-1}\in\mathbb{R}$. The first imaginary unit is defined to be $e_0=1$ whereas $e_1,\ldots,e_{n-1}$ are defined by the multiplication table \cite{Olariu02}
\begin{equation}
\label{eq:4}
e_ie_k=e_{(i+k)\bmod n}.
\end{equation}
We call $\real p=a_0$ the real part of $p$ and $\imag_i p=a_i$ the imaginary parts of $p$ for $i=0,1,\ldots,n-1$. We remark that our imaginary index starts with $0$, which includes the real part, to facilitate a shorter definition of equations such as \eqref{eq:31} and \eqref{eq:38}. Multiplication follows the usual associative and commutative rules \cite{Olariu02}. The inverse of $p$ is the number $p^{-1}$ such that $pp^{-1}=1$ \cite{Olariu02}. Olariu named it the polar $n$-complex algebra because it is motivated by the polar representation of a complex number \cite{Olariu02} where $a+jb\in\mathbb{C}$ is represented geometrically by its modulus $\sqrt{a^2+b^2}$ and polar angle $\arctan(b/a)$. Likewise, the polar $n$-complex number in \eqref{eq:3} can be represented by its modulus
\begin{equation}
\label{eq:5}
|p|=\sqrt{a_0^2+a_1^2+\cdots+a_{n-1}^2}
\end{equation}
together with $\lceil n/2\rceil-1$ azimuthal angles, $\lceil n/2\rceil-2$ planar angles, and one polar angle (two if $n$ is even), totaling $n-1$ angles \cite{Olariu02}. To calculate these angles, let $[A_0,A_1,\ldots,A_{n-1}]^T$ be the discrete Fourier transform (DFT) of $[a_0,a_1,\ldots,a_{n-1}]^T$, defined by
\begin{equation}
\label{eq:6}
\begin{bmatrix}
A_0\\
A_1\\
\vdots\\
A_{n-1}
\end{bmatrix}=\mathbf{F}_n\begin{bmatrix}
a_0\\a_1\\
\vdots\\
a_{n-1}
\end{bmatrix},
\end{equation}
where $\omega_n=e^{-j2\pi/n}$ is a principal $n$th root of unity and
\begin{equation}
\label{eq:7}
\mathbf{F}_n=\frac{1}{\sqrt{n}}\begin{bmatrix}
1 & 1 & \cdots & 1\\
1 & \omega_n & \cdots & \omega_n^{n-1}\\
\vdots & \vdots & \ddots & \vdots\\
1 & \omega_n^{n-1} & \cdots & \omega_n^{(n-1)(n-1)}
\end{bmatrix},
\end{equation}
which is unitary, i.e., $\mathbf{F}_n^*=\mathbf{F}_n^{-1}$. For $k=1,\ldots,\lceil n/2\rceil-1$, the azimuthal angles $\phi_k$ can be calculated from \cite{Olariu02}
\begin{equation}
\label{eq:8}
A_k=|A_k|e^{-j\phi_k},
\end{equation}where $0\leq\phi_k<2\pi$. Note that we have reversed the sign of the angles as Olariu was a physicist so his DFT is our inverse DFT. Furthermore, for $k=2,\ldots,\lceil n/2\rceil-1$, the planar angles $\psi_{k-1}$ are defined by \cite{Olariu02}

\begin{equation}
\label{eq:9}
\tan\psi_{k-1}=\frac{|A_1|}{|A_k|},
\end{equation}
where $0\leq\psi_k\leq\pi/2$. The polar angle $\theta_+$ is defined as \cite{Olariu02}
\begin{equation}
\label{eq:10}
\tan\theta_+=\frac{\sqrt{2}|A_1|}{A_0},
\end{equation}
where $0\leq\theta_+\leq\pi$. Finally, for even $n$, there is an additional polar angle \cite{Olariu02},
\begin{equation}
\label{eq:11}
\tan\theta_-=\frac{\sqrt{2}|A_1|}{A_{n/2}},
\end{equation}
where $0\leq\theta_-\leq\pi$. We can uniquely recover the polar $n$-complex number given its modulus and the $n-1$ angles defined above.\footnote{Exact formulas can be found in \cite[pp.~212--216]{Olariu02}, especially (6.80), (6.81), (6.103), and (6.104). We remark that Olariu's choice of $A_1$ as a reference for the planar and polar angles is convenient but somewhat arbitrary.} More importantly, the polar $n$-complex numbers are ring-isomorphic\footnote{A ring isomorphism is a bijective map $\chi:R\rightarrow S$ such that $\chi(1_R)=1_S$, $\chi(ab)=\chi(a)\chi(b)$, and $\chi(a+b)=\chi(a)+\chi(b)$ for all $a,b\in R$.} to the following matrix representation \cite{Olariu02}, $\chi:\mathbb{K}_n\rightarrow\mathbb{R}^{n\times n}$:
\begin{equation}
\label{eq:12}
\chi(p)=\begin{bmatrix}
a_0 & a_{n-1} & a_{n-2} & \cdots & a_1\\
a_1 & a_0 & a_{n-1} & \cdots & a_2\\
a_2 & a_1 & a_0 & \cdots & a_3\\
\vdots & \vdots & \vdots & \ddots & \vdots\\
a_{n-1} & a_{n-2} & a_{n-3} & \cdots & a_0
\end{bmatrix},
\end{equation}
which is a circulant matrix.\footnote{A circulant matrix is a matrix $\mathbf{C}$ where each column is a cyclic shift of its previous column, such that $\mathbf{C}$ is diagonalizable by the DFT \cite{Davis79}. More concisely, we can write $c_{ik}=a_{(i-k)\bmod n}$.} This means that polar $n$-complex multiplication is equivalent to circular convolution. Due to the circular convolution theorem, it can be implemented efficiently in the Fourier domain \cite{Gleich13}:
\begin{equation}
\label{eq:13}
\mathbf{F}_n(\mathbf{a}\circledast\mathbf{b})=\sqrt{n}(\mathbf{F}_n\mathbf{a})\circ(\mathbf{F}_n\mathbf{b}),
\end{equation}
where $\mathbf{a},\mathbf{b}\in\mathbb{R}^n$, $\circledast$ denotes circular convolution, and $\circ$ is the Hadamard product. The isomorphism in \eqref{eq:12} implies \cite{Gleich13}:
\begin{gather}
\label{eq:14}
\chi(1)=\mathbf{I}_n,\\
\label{eq:15}
\chi(pq)=\chi(p)\chi(q),\\
\label{eq:16}
\chi(p+q)=\chi(p)+\chi(q),\\
\label{eq:17}
\chi(p^{-1})=\chi(p)^{-1},
\end{gather}
for $1,p,q\in\mathbb{K}_n$. From these properties it becomes natural to define the polar $n$-complex conjugation $\bar p$ by \cite{Gleich13}
\begin{equation}
\label{eq:18}
\chi(\bar p)=\chi(p)^*
\end{equation}
where $\chi(p)^*$ denotes the conjugate transpose of $\chi(p)$. This allows us to propose a new scalar product inspired by its quaternionic counterpart \cite{Mandic11},
\begin{equation}
\label{eq:19}
\langle p,q\rangle=\real p\bar q,
\end{equation}
which we will use later for the Frobenius norm of the polar $n$-complex numbers. Note that this differs from the usual definition $\langle p,q\rangle=p\bar q$ \cite{Gleich13} because we need the real restriction for the desirable property $\langle p,p\rangle=|p|^2$. To wit, observe that $\real p=a_0=[\chi(p)]_{ii}$ for arbitrary $i$, thus $\real p\bar q=[\chi(p)\chi(q)^*]_{ii}=\sum_{k=1}^n [\chi(p)]_{ik}[\chi(q)]_{ik}$ which is the standard inner product between the underlying elements. The same results can also be obtained from $\real\bar pq$. In other words, if $p=\sum_{i=0}^{n-1}a_ie_i$ and $q=\sum_{i=0}^{n-1}b_ie_i$, we get
\begin{equation}
\label{eq:20}
\real p\bar q=\real\bar pq=\sum_{i=0}^{n-1}a_ib_i.
\end{equation}

An alternative way of looking at the isomorphism in \eqref{eq:12} is to consider the circulant matrix as a sum \cite{Kra12},
\begin{equation}
\label{eq:21}
\chi(p)=a_0\mathbf{E}_n^0+a_1\mathbf{E}_n^1+\ldots+a_{n-1}\mathbf{E}_n^{n-1},
\end{equation}
where
\begin{equation}
\label{eq:22}
\mathbf{E}_n=\begin{bmatrix}
0 & 0 & \cdots & 0 & 1\\
1 & 0 & \cdots & 0 & 0\\
0 & 1 & \cdots & 0 & 0\\
\vdots & \vdots & \ddots & \vdots & \vdots\\
0 & 0 & \cdots & 1 & 0
\end{bmatrix}\in\mathbb{R}^{n\times n},
\end{equation}
following the convention that $\mathbf{E}_n^0=\mathbf{I}_n$. It is trivial to show that $\mathbf{E}_n^i\mathbf{E}_n^k=\mathbf{E}_n^{(i+k)\bmod n}$ \cite{Kra12}. Hence the isomorphism is immediately obvious. Recall that the group of imaginary units $\{\mathbf{E}_n^i\}_{i=0}^{n-1}$ is called cyclic if we can use a single basis element $\mathbf{E}_n$ to generate the entire algebra, so the algebra in \eqref{eq:21} has another name called a cyclic algebra \cite{Cohn03}.

The circulant isomorphism helps us to utilize recent literature on circulant algebra matrices \cite{Gleich13}, which simplifies our work in the next subsection. The circulant algebra in \cite{Gleich13} breaks the modulus into $n$ pieces such that the original number can be uniquely recovered without the planar and polar angles. However, for the $\ell_1$-norm at least, we need a single number for minimization purposes. Moreover, although our goal is phase preservation, we do not need to calculate the angles explicitly for the PCP problem. Consequently, we will stick with the original definition in \eqref{eq:5}.

\subsection{Polar $n$-Complex Matrices and Their Isomorphisms}
\label{subsec:iso}

We denote the set of $l\times m$ matrices with polar $n$-complex entries by $\mathbb{K}_n^{l\times m}$. For a polar $n$-complex matrix $\mathbf{A}\in\mathbb{K}_n^{l\times m}$, we define its adjoint matrix via $\chi_{lm}:\mathbb{K}_n^{l\times m}\rightarrow\mathbb{R}^{ln\times mn}$ \cite{Gleich13}:
\begin{equation}
\label{eq:23}
\chi_{lm}(\mathbf{A})=\begin{bmatrix}
\chi(A_{11}) & \chi(A_{12}) & \ldots & \chi(A_{1m})\\
\chi(A_{21}) & \chi(A_{22}) & \ldots & \chi(A_{2m})\\
\vdots & \vdots & \ddots & \vdots\\
\chi(A_{l1}) & \chi(A_{l2}) & \ldots & \chi(A_{lm})
\end{bmatrix}.
\end{equation}

\begin{table*}[!t]
\renewcommand{\arraystretch}{1.1}
\caption[]{Step-by-step illustration of the cft for $\mathbf{A}\in\mathbb{K}_2^{2\times 2}$; see \eqref{eq:7}, \eqref{eq:23} and \eqref{eq:24} for definitions. In general, due to the properties of the circulant blocks, the cft can block diagonalize the adjoint matrix of any polar $n$-complex matrices. Here $\mathbf{F}_2=\frac{1}{\sqrt{2}}
\begin{bmatrix}
1 & 1\\
1 & -1
\end{bmatrix}$}
\label{tab:1}
\centering
\begin{tabular}{cccc}
\hline
$\mathbf{A}$ & $\chi_{2,2}(\mathbf{A})$ & $(\mathbf{I}_2\otimes\mathbf{F}_2)\chi_{2,2}(\mathbf{A})(\mathbf{I}_2\otimes\mathbf{F}_2^*)$ & $\mathbf{P}_{4,2}(\mathbf{I}_2\otimes\mathbf{F}_2)\chi_{2,2}(\mathbf{A})(\mathbf{I}_2\otimes\mathbf{F}_2^*)\mathbf{P}_{4,2}^{-1}$\\
\hline
\noalign{\smallskip}
$\begin{bmatrix}
a_0+a_1e_1 & c_0+c_1e_1\\
b_0+b_1e_1 & d_0+d_1e_1
\end{bmatrix}$ & $\begin{bmatrix}
a_0 & a_1 & c_0 & c_1\\
a_1 & a_0 & c_1 & c_0\\
b_0 & b_1 & d_0 & d_1\\
b_1 & b_0 & d_1 & d_0
\end{bmatrix}$ & $\begin{bmatrix}
a_0+a_1 & 0 & c_0+c_1 & 0\\
0 & a_0-a_1 & 0 & c_0-c_1\\
b_0+b_1 & 0 & d_0+d_1 & 0\\
0 & b_0-b_1 & 0 & d_0-d_1
\end{bmatrix}$ & $\begin{bmatrix}
a_0+a_1 & c_0+c_1 & 0 & 0\\
b_0+b_1 & d_0+d_1 & 0 & 0\\
0 & 0 & a_0-a_1 & c_0-c_1\\
0 & 0 & b_0-b_1 & d_0-d_1
\end{bmatrix}$\\
\noalign{\smallskip}
\hline
\end{tabular}
\end{table*}

We will now show that the $\mathbb{R}$-linear map $\chi_{lm}(\mathbf{A}):\mathbb{R}^{mn}\rightarrow\mathbb{R}^{ln}$ operates in an identical manner as the $\mathbb{K}_n$-linear map $\mathbf{A}:\mathbb{K}_n^m\rightarrow\mathbb{K}_n^l$.

\begin{thm}
\label{thm:1}
Let $\mathbf{A}\in\mathbb{K}_n^{l\times m}$. Then the following holds:
\begin{enumerate}
\item $\chi_{mm}(\mathbf{I}_m)=\mathbf{I}_{mn}$ if $\mathbf{I}_m\in\mathbb{K}_n^{m\times m}$;
\item $\chi_{lr}(\mathbf{A}\mathbf{B})=\chi_{lm}(\mathbf{A})\chi_{mr}(\mathbf{B})$ if $\mathbf{B}\in\mathbb{K}_n^{m\times r}$;
\item $\chi_{lm}(\mathbf{A}+\mathbf{B})=\chi_{lm}(\mathbf{A})+\chi_{lm}(\mathbf{B})$ if $\mathbf{B}\in\mathbb{K}_n^{l\times m}$;
\item $\chi_{lm}(\mathbf{A}^*)=\chi_{lm}(\mathbf{A})^*$;
\item $\chi_{lm}(\mathbf{A}^{-1})=\chi_{lm}(\mathbf{A})^{-1}$ if it exists.
\end{enumerate}
\end{thm}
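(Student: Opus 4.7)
The plan is to reduce each of the five claims to the scalar-level isomorphism properties \eqref{eq:14}--\eqref{eq:17} and the definition \eqref{eq:18} of polar $n$-complex conjugation, using the block structure of $\chi_{lm}$ in \eqref{eq:23}. Since $\chi_{lm}(\mathbf{A})$ is built by replacing each scalar entry $A_{ij}\in\mathbb{K}_n$ with its $n\times n$ circulant image $\chi(A_{ij})$, the argument will be to compare the $(i,j)$-block on each side of every claimed identity.

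For part (1), the $(i,j)$-block of $\chi_{mm}(\mathbf{I}_m)$ equals $\chi(1)=\mathbf{I}_n$ on the diagonal and $\chi(0)=\mathbf{0}_n$ off-diagonal by \eqref{eq:14}, which is exactly $\mathbf{I}_{mn}$. For part (3), the block decomposition turns matrix addition into block-wise addition and then into scalar addition inside each $\chi$, so \eqref{eq:16} finishes it immediately. Part (2) is the main engine: I will compute the $(i,j)$-block of $\chi_{lr}(\mathbf{AB})$ as $\chi\!\left(\sum_{k=1}^{m} A_{ik}B_{kj}\right)$, and then apply \eqref{eq:15} and \eqref{eq:16} term-by-term to obtain $\sum_{k=1}^{m}\chi(A_{ik})\chi(B_{kj})$, which is precisely the $(i,j)$-block of the block-matrix product $\chi_{lm}(\mathbf{A})\chi_{mr}(\mathbf{B})$.

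For part (4), writing $\mathbf{A}^*$ as the matrix whose $(i,j)$-entry is $\overline{A_{ji}}$, the $(i,j)$-block of $\chi_{lm}(\mathbf{A}^*)$ is $\chi(\overline{A_{ji}})=\chi(A_{ji})^*$ by \eqref{eq:18}; on the other hand the $(i,j)$-block of $\chi_{lm}(\mathbf{A})^*$ is the conjugate transpose of the $(j,i)$-block of $\chi_{lm}(\mathbf{A})$, namely $\chi(A_{ji})^*$, so the two agree. Part (5) follows formally from parts (1) and (2): if $\mathbf{A}^{-1}$ exists, then applying $\chi_{mm}$ to $\mathbf{A}\mathbf{A}^{-1}=\mathbf{A}^{-1}\mathbf{A}=\mathbf{I}_m$ and invoking (1)--(2) gives $\chi_{mm}(\mathbf{A})\chi_{mm}(\mathbf{A}^{-1})=\chi_{mm}(\mathbf{A}^{-1})\chi_{mm}(\mathbf{A})=\mathbf{I}_{mn}$, so $\chi_{mm}(\mathbf{A}^{-1})$ is the two-sided inverse of $\chi_{mm}(\mathbf{A})$.

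There is no real obstacle here: the whole theorem is essentially a bookkeeping exercise that lifts the ring isomorphism $\chi:\mathbb{K}_n\to\mathbb{R}^{n\times n}$ to a ring (and $*$-)homomorphism on matrices, exploiting the fact that block-matrix arithmetic commutes with entry-wise application of any algebra homomorphism. The only place where one must be mildly careful is keeping the index bookkeeping straight in part (2) (the three different sizes $l$, $m$, $r$) and the transpose swap in part (4); everything else is immediate from \eqref{eq:14}--\eqref{eq:18}.
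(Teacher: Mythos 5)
Your proposal is correct and follows essentially the same route as the paper: parts (1), (3), (4) by direct blockwise substitution using \eqref{eq:14}, \eqref{eq:16}, \eqref{eq:18}, part (2) by expanding the $(i,j)$-block of $\chi_{lr}(\mathbf{AB})$ as $\chi\bigl(\sum_k A_{ik}B_{kj}\bigr)$ and pushing $\chi$ through via \eqref{eq:15}--\eqref{eq:16}, and part (5) by applying $\chi_{mm}$ to $\mathbf{A}\mathbf{A}^{-1}=\mathbf{I}_m$ and invoking (1)--(2). You simply spell out the ``direct substitution'' steps the paper leaves implicit; no substantive difference.
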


\begin{proof}
1, 3, and 4 can be verified by direct substitution. 5 can be derived from 1--2 via the equality $\mathbf{AA}^{-1}=\mathbf{I}$. 2 can be proven using \eqref{eq:12} and \eqref{eq:15}:
\begin{align*}
\chi_{lm}(\mathbf{AB}) &= \chi\left(\begin{bmatrix}
\sum\limits_{k=1}^m A_{1k}B_{k1} & \cdots & \sum\limits_{k=1}^m A_{1k}B_{kr}\\
\vdots & \ddots & \vdots\\
\sum\limits_{k=1}^m A_{lk}B_{k1} & \cdots & \sum\limits_{k=1}^m A_{lk}B_{kr}
\end{bmatrix}\right)\\
&= \begin{bmatrix}
\chi\left(\sum\limits_{k=1}^m A_{1k}B_{k1}\right) & \cdots & \chi\left(\sum\limits_{k=1}^m A_{1k}B_{kr}\right)\\
\vdots & \ddots & \vdots\\
\chi\left(\sum\limits_{k=1}^m A_{lk}B_{k1}\right) & \cdots & \chi\left(\sum\limits_{k=1}^m A_{lk}B_{kr}\right)
\end{bmatrix}\\
&= \begin{bmatrix}
\sum\limits_{k=1}^m \chi(A_{1k})\chi(B_{k1}) & \cdots & \sum\limits_{k=1}^m \chi(A_{1k})\chi(B_{kr})\\
\vdots & \ddots & \vdots\\
\sum\limits_{k=1}^m \chi(A_{lk})\chi(B_{k1}) & \cdots & \sum\limits_{k=1}^m \chi(A_{lk})\chi(B_{kr})
\end{bmatrix}\\
&= \chi_{lm}(\mathbf{A})\chi_{lm}(\mathbf{B}).
\end{align*}
In other words, the adjoint matrix $\chi_{lm}(\mathbf{A})$ is an isomorphic representation of the polar $n$-complex matrix $\mathbf{A}$.
\end{proof}

The above isomorphism is originally established for circulant matrix-vector multiplication \cite{Gleich13}, which we have just extended to the case of matrix-matrix multiplication. This isomorphism simplifies our work both theoretically and experimentally by allowing us to switch to the adjoint matrix representation where it is more convenient.

\subsection{Singular Value Decomposition}
\label{subsec:svd}

For the SVD of $\mathbf{A}\in\mathbb{K}_n^{l\times m}$, we first define the stride-by-$s$ \cite{Granata92} permutation matrix of order $m$ by:
\begin{equation}
\label{eq:24}
\left[\mathbf{P}_{m,s}\right]_{ik}=\left[\mathbf{I}_m\right]_{is-(m-1)\lfloor is/m\rfloor,k}
\end{equation}
for $i,k=0,1,\ldots,m-1$. This is equivalent to but more succinct than the standard definition in the literature \cite{Granata92}. The stride-by-$s$ permutation greatly simplifies the definition of the two-dimensional shuffle in the following. We define the circulant Fourier transform (CFT) and its inverse (ICFT), in the same way as \cite{Gleich13}:
\begin{gather}
\label{eq:25}
\mathrm{cft}(\mathbf{A})=\mathbf{P}_{ln,l}(\mathbf{I}_l\otimes\mathbf{F}_n)\chi_{lm}(\mathbf{A})(\mathbf{I}_m\otimes\mathbf{F}_n^*)\mathbf{P}_{mn,m}^{-1},\\
\label{eq:26}
\chi_{lm}(\mathrm{icft}(\mathbf{\hat{A}}))=(\mathbf{I}_l\otimes\mathbf{F}_n^*)\mathbf{P}_{ln,l}^{-1}\mathbf{\hat{A}}\mathbf{P}_{mn,m}(\mathbf{I}_m\otimes\mathbf{F}_n),
\end{gather}
where $\mathbf{P}_{ln,l}(\cdot)\mathbf{P}_{mn,m}^{-1}$ shuffles an $ln\times mn$ matrix containing $n\times n$ diagonal blocks into a block diagonal matrix containing $l\times m$ blocks. Please refer to Table~\ref{tab:1} to see this shuffle in action. The purpose of $\mathrm{cft}(\mathbf{A})$ is to block diagonalize the adjoint matrix of $\mathbf{A}$ into the following form \cite{Gleich13}:
\begin{equation}
\label{eq:27}
\mathbf{\hat{A}}=\mathrm{cft}(\mathbf{A})=\begin{bmatrix}
\mathbf{\hat{A}}_1\\
& \ddots\\
& & \mathbf{\hat{A}}_n
\end{bmatrix},
\end{equation}
while $\mathrm{icft}(\mathbf{\hat{A}})$ inverts this operation. Here, $\mathbf{\hat{A}}_i$ can be understood as the eigenvalues of the input as produced in the Fourier transform order, as noted by \cite{Gleich13}. The SVD of $\mathbf{A}$ can be performed blockwise through the SVD of $\mathrm{cft}(\mathbf{A})$ \cite{Kilmer08}:
\begin{equation}
\label{eq:28}
\begin{bmatrix}
\mathbf{\hat{U}}_1\\
& \ddots\\
& & \mathbf{\hat{U}}_n
\end{bmatrix}
\begin{bmatrix}
\mathbf{\hat{\Sigma}}_1\\
& \ddots\\
& & \mathbf{\hat{\Sigma}}_n
\end{bmatrix}
\begin{bmatrix}
\mathbf{\hat{V}}_1\\
& \ddots\\
& & \mathbf{\hat{V}}_n
\end{bmatrix}^*,
\end{equation}
then we can use $\mathrm{icft}(\mathbf{\hat{U}})$, $\mathrm{icft}(\mathbf{\hat{\Sigma}})$, and $\mathrm{icft}(\mathbf{\hat{V}})$ to get $\mathbf{U}\in\mathbb{K}_n^{l\times l}$, $\mathbf{S}\in\mathbb{K}_n^{l\times m}$, and $\mathbf{V}\in\mathbb{K}_n^{m\times m}$ where $\mathbf{U}$ and $\mathbf{V}$ are unitary \cite{Kilmer08,Gleich13}. This is equivalent to the t-SVD in tensor signal processing (see Algorithm \ref{alg:1}) \cite{Kilmer08}, provided that we store the $l\times m$ polar $n$-complex matrix into an $l\times m\times n$ real tensor,\footnote{By convention, we denote tensors with calligraphic letters. For a three-dimensional tensor $\mathcal{A}\in\mathbb{R}^{n_1\times n_2\times n_3}$, a fiber is a one-dimension subarray defined by fixing two of the indices, whereas a slice is a two-dimensional subarray defined by fixing one of the indices \cite{Kolda09}. The $(i,k,l)$-th element of $\mathcal{A}$ is denoted by $A_{ikl}$. If we indicate all elements of a one-dimensional subarray using the \textsc{Matlab} colon notation, then $\mathbf{A}_{:kl}$, $\mathbf{A}_{i:l}$, and $\mathbf{A}_{ik:}$ are called the column, row and tube fibers, respectively \cite{Kolda09}. Similarly, $\mathbf{A}_{i::}$, $\mathbf{A}_{:k:}$, and $\mathbf{A}_{::l}$ are called the horizontal, lateral, and frontal slides, respectively \cite{Kolda09}. Notably, Kilmer, Martin, and Perrone \cite{Kilmer08} reinterprets an $n_1\times n_2\times n_3$ tensor as an $n_1\times n_2$ matrix of tubes (of length $n_3$). This is most relevant to our present work when polar $n_3$-complex numbers are seen as tubes.} then the $n$-point DFT along all tubes is equivalent to the CFT. Matrix multiplication can also be done blockwise in the CFT domain with the $\sqrt{n}$ scaling as before.

\begin{algorithm}
\caption{t-SVD \cite{Kilmer08}}
\begin{algorithmic}[1]
\label{alg:1}
\REQUIRE $\mathcal{X}\in\mathbb{C}^{l\times m\times n}$ \COMMENT{See footnote 5 for tensor notation.}
\ENSURE $\mathcal{U}$, $\mathcal{S}$, $\mathcal{V}$
\STATE ${\mathcal{\hat{X}}}\leftarrow\rm{fft}(\mathcal{X},n,3)$ \COMMENT{Applies $n$-point DFT to each tube.}
\FOR{$i=1:n$}
    \STATE $[\mathbf{\hat{U}}_{::i},\mathbf{\hat{S}}_{::i},\mathbf{\hat{V}}_{::i}]\leftarrow\mathrm{svd}(\mathbf{\hat{X}}_{::i})$ \COMMENT{SVD each frontal slide.}
\ENDFOR
\STATE $\mathcal{U}\leftarrow\mathrm{ifft}(\mathcal{\hat{U}},n,3);\ \mathcal{S}\leftarrow\mathrm{ifft}(\mathcal{\hat{S}},n,3);\ \mathcal{V}\leftarrow\mathrm{ifft}(\mathcal{\hat{V}},n,3)$
\end{algorithmic}
\end{algorithm}

\subsection{Proposed Extensions}
\label{subsec:iso}

In order to study the phase angle between matrices, we define a new polar $n$-complex inner product as
\begin{equation}
\label{eq:29}
\langle\mathbf{A},\mathbf{B}\rangle=\real\mathrm{tr}(\mathbf{AB}^*),\ \mathbf{A},\mathbf{B}\in\mathbb{K}_n^{l\times m}.
\end{equation}
and use it to induce the polar $n$-complex Frobenius norm:
\begin{equation}
\label{eq:30}
\|\mathbf{A}\|_F=\sqrt{\langle\mathbf{A},\mathbf{A}\rangle}.
\end{equation}
We propose two further isomorphisms for polar $n$-complex matrices via $\xi:\mathbb{K}_n^{l\times m}\rightarrow\mathbb{R}^{l\times mn}$ and $\nu:\mathbb{K}_n^{l\times m}\rightarrow\mathbb{R}^{lmn}$:
\begin{gather}
\label{eq:31}
\xi(\mathbf{A})=[\imag_0\mathbf{A},\imag_1\mathbf{A},\ldots,\imag_{n-1}\mathbf{A}],\\
\nu(\mathbf{A})=\vect\xi(\mathbf{A}).
\end{gather}
These are the polar $n$-complex matrix counterparts of the tensor $\mathrm{unfold}$ and $\vect$ operators, respectively.\footnote{Column unfolding reshapes the tensor $\mathcal{A}\in\mathbb{R}^{n_1\times n_2\times n_3}$ into a matrix $\mathbf{M}\in\mathbb{R}^{n_1\times n_2n_3}$ by mapping each tensor element $A_{ikl}$ into the corresponding matrix element $M_{i,k+(l-1)n_2}$ \cite{Kolda09}.} We end this subsection by enumerating two elementary algebraic properties of $\mathbb{K}_n^{l\times m}$, which will come in handy when we investigate the trace norm later in Theorem \ref{thm:9}. The proofs are given below for completeness.

\begin{prop}
\label{thm:2}
If $\mathbf{A},\mathbf{B}\in\mathbb{K}_n^{l\times m}$, then the following holds:
\begin{enumerate}
\item $\langle\mathbf{A},\mathbf{B}\rangle=\real\mathrm{tr}(\mathbf{A}^*\mathbf{B})=\nu(\mathbf{A})^T\nu(\mathbf{B})$;
\item $\|\mathbf{A}\|_F^2=\sum_i|\sigma_i(\mathbf{A})|$.
\end{enumerate}
where $\sigma_i(\mathbf{A})$ are the singular values of $\mathbf{A}$ obtained from $\mathrm{icft}(\mathbf{\hat{\Sigma}})$ after steps \eqref{eq:27} and \eqref{eq:28}.
\end{prop}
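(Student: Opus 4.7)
The three-way equality splits naturally into two independent verifications, both of which reduce to the scalar identity \eqref{eq:20}. First I would expand the trace coordinate-wise:
\begin{equation*}
\mathrm{tr}(\mathbf{AB}^*)=\sum_{i=1}^{l}\sum_{k=1}^{m}A_{ik}\overline{B_{ik}},\qquad \mathrm{tr}(\mathbf{A}^*\mathbf{B})=\sum_{i=1}^{l}\sum_{k=1}^{m}\overline{A_{ik}}B_{ik}.
\end{equation*}
Taking real parts and applying \eqref{eq:20} termwise ($\real p\bar q=\real\bar p q$) shows that the two sums coincide and that each term equals $\sum_{j=0}^{n-1}a_{ikj}b_{ikj}$. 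Rearranging the triple sum and recalling that $\nu(\mathbf{A})$ is constructed by stacking the imaginary parts $\imag_j\mathbf{A}$ via $\vect\xi(\cdot)$, I recognise the result as precisely the Euclidean inner product $\nu(\mathbf{A})^T\nu(\mathbf{B})$.

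\textbf{Plan for Part 2.} The strategy is to push the computation through the circulant isomorphism $\chi_{lm}$ and the CFT, where it reduces to the Parseval identity for the unitary Fourier matrix $\mathbf{F}_n$. Starting from Part~1 with $\mathbf{B}=\mathbf{A}$, I have $\|\mathbf{A}\|_F^2=\sum_{i,k,j}a_{ikj}^2$. Because each scalar component $a_{ikj}$ occurs in exactly $n$ positions of the circulant block $\chi(A_{ik})$, one obtains $\|\chi_{lm}(\mathbf{A})\|_F^2=n\|\mathbf{A}\|_F^2$. Next, since $\mathrm{cft}(\mathbf{A})$ differs from $\chi_{lm}(\mathbf{A})$ only by unitary left/right factors $\mathbf{P}_{ln,l}(\mathbf{I}_l\otimes\mathbf{F}_n)$ and $\mathbf{P}_{mn,m}(\mathbf{I}_m\otimes\mathbf{F}_n)$, the Frobenius norm is preserved. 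Using the block-diagonal structure \eqref{eq:27} and the blockwise SVD \eqref{eq:28}, the Frobenius norm of $\mathrm{cft}(\mathbf{A})$ is the sum of squares of the diagonal entries $\hat\sigma_{ki}$ of every $\hat{\mathbf\Sigma}_i$.

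\textbf{Closing the loop via Parseval.} The last step is to rewrite $\sum_{k,i}\hat\sigma_{ki}^2$ in terms of the polar $n$-complex singular values $\sigma_k(\mathbf{A})=[\mathrm{icft}(\hat{\mathbf\Sigma})]_{kk}$. For each fixed $k$, the $n$-tuple $(\hat\sigma_{k1},\dots,\hat\sigma_{kn})$ is (up to the stride permutation) the image under $\mathbf{F}_n$ of the real tuple of imaginary parts $(\imag_0\sigma_k(\mathbf{A}),\dots,\imag_{n-1}\sigma_k(\mathbf{A}))$. Since $\mathbf{F}_n$ is unitary, Parseval gives $\sum_{i=1}^{n}\hat\sigma_{ki}^2=\sum_{j=0}^{n-1}(\imag_j\sigma_k(\mathbf{A}))^2=|\sigma_k(\mathbf{A})|^2$, and summing over $k$ completes the proof.

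\textbf{Main obstacle.} The delicate point is bookkeeping of the normalisation: the factor $n$ between $\|\chi_{lm}(\mathbf{A})\|_F^2$ and $\|\mathbf{A}\|_F^2$ must be reconciled with the unitarity of $\mathbf{F}_n$ and with the stride permutations $\mathbf{P}_{\cdot,\cdot}$ so that no stray $\sqrt{n}$ or $1/n$ survives. Once the cft is recognised as a genuine unitary similarity between $\chi_{lm}(\mathbf{A})$ and its block-diagonalisation, and the icft is recognised as the exact inverse unitary DFT on each tube fibre, Parseval applies without extra scaling and the identification of $|\sigma_k(\mathbf{A})|^2$ with the per-block sum of squared singular values follows cleanly.
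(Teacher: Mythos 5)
Your Part 1 is correct and is essentially the paper's own argument: expand $\mathrm{tr}(\mathbf{AB}^*)$ entrywise, apply \eqref{eq:20} term by term, and recognize the resulting real sum as $\nu(\mathbf{A})^T\nu(\mathbf{B})$.

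Part 2 is where you depart from the paper, and where one step fails as written. The paper stays entirely in the hypercomplex domain: because $\mathbf{U}$ and $\mathbf{V}$ are unitary, the Frobenius norm of \eqref{eq:29}--\eqref{eq:30} is invariant, so $\|\mathbf{A}\|_F^2=\|\mathbf{\Sigma}\|_F^2$, which by the corollary of Part 1 equals $\sum_i|\sigma_i(\mathbf{A})|^2$ --- one line. Your route through $\chi_{lm}$, the cft, and Parseval is workable but your final Parseval identity is off by a factor of $n$: the eigenvalues of the circulant block $\chi(\sigma_k(\mathbf{A}))$ are not $\mathbf{F}_n$ applied to the coefficient vector but $\sqrt{n}\,\mathbf{F}_n$ applied to it (this is precisely the $\sqrt{n}$ appearing in \eqref{eq:13}), so the correct statement is $\sum_{i}\hat\sigma_{ki}^2=n\,|\sigma_k(\mathbf{A})|^2$, not $|\sigma_k(\mathbf{A})|^2$. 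Combined with your (correct) observation that $\|\chi_{lm}(\mathbf{A})\|_F^2=n\|\mathbf{A}\|_F^2$ and the unitarity of the cft factors, your chain as written gives $n\|\mathbf{A}\|_F^2=\sum_k|\sigma_k(\mathbf{A})|^2$, which is wrong by a factor of $n$; restoring the missing $n$ on the right makes the factors cancel and recovers the claim. In other words, the ``main obstacle'' you flagged is exactly the step you resolved incorrectly --- Parseval does \emph{not} apply ``without extra scaling'' to the coefficients-to-eigenvalues map of a circulant. One further remark: the proposition as printed reads $\sum_i|\sigma_i(\mathbf{A})|$ without the exponent, but both the paper's proof and your corrected argument establish $\sum_i|\sigma_i(\mathbf{A})|^2$, which is the version actually used later (e.g., in the proof of Theorem~\ref{thm:10}), so you targeted the intended statement.
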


\begin{proof}
\leavevmode
\begin{enumerate}
\item This is a direct consequence of (20) after observing that $\real\mathrm{tr}(\mathbf{AB}^*)=\real\sum_{i,k}A_{ik}\bar B_{ik}$. From this we can say that our polar $n$-complex inner product is Euclidean. As a corollary we have $\|\mathbf{A}\|_F^2=\sum_{i,k}|A_{ik}|^2$.
\item As the Frobenius norm is invariant under any unitary transformation \cite{Horn13}, we can write $\|\mathbf{A}\|_F^2=\|\mathbf{\Sigma}\|_F^2=\sum_i|\sigma_i(\mathbf{A})|^2$.
\end{enumerate}
\end{proof}

\section{Extension to Polar $n$-Bicomplex Numbers}
\label{sec:n-bicomplex}

One problem with the real numbers is that $\sqrt{-1}\notin\mathbb{R}$; that is, they are not algebraically closed. This affects the polar $n$-complex numbers too since their real and imaginary parts consist of real coefficients only. To impose algebraic closure for certain applications, we can go one step further and use complex coefficients instead. More specifically, we extend the polar $n$-complex algebra by allowing for complex coefficients in \eqref{eq:3}, such that
\begin{equation}
\label{eq:33}
p=a_0e_0+a_1e_1+\cdots+a_{n-1}e_{n-1}\in\mathbb{CK}_n,
\end{equation}
where $a_0,a_1,\ldots,a_{n-1}\in\mathbb{C}$. In other words, both real and imaginary parts of $p$ now contain complex numbers (effectively doubling its dimensions). This constitutes our definition of the polar $n$-bicomplex numbers $\mathbb{CK}_n$. The first imaginary unit is still $e_0=1$ and $e_1,\ldots,e_{n-1}$ satisfies the same multiplication table in \eqref{eq:4}. We can now write $\real p=\real a_0$ for the real part of $p$ (note the additional $\real$) and $\imag_i p=a_i$ for the imaginary parts for $i=0,1,\ldots,n-1$ (as before, the imaginary part includes the real part for notational convenience). The modulus then becomes
\begin{equation}
\label{eq:34}
|p|=\sqrt{|a_0|^2+|a_1|^2+\cdots+|a_{n-1}|^2},
\end{equation}
along with the same $n-1$ angles in (\ref{eq:8}--\ref{eq:11}). For example, if $g=(1+2j)+(3+4j)e_1+(5+6j)e_2$, we have $\real g=1$, $\imag_0 g=1+2j$, $\imag_1 g=3+4j$, $\imag_2 g=5+6j$, and $|g|=\sqrt{91}$. The polar $n$-bicomplex numbers are ring-isomorphic to the same matrix in \eqref{eq:12}, and have the same properties (\ref{eq:14}--\ref{eq:17}). The multiplication can still be done in the Fourier domain if desired. The polar $n$-bicomplex conjugation can be defined in the same manner as \eqref{eq:18}. Given our new definition of $\real$, the scalar product is:
\begin{equation}
\label{eq:35}
\langle p,q\rangle=\real p\bar q.
\end{equation}
Note that we still have $\langle p,p\rangle=|p|^2$, because $\real p\bar q=\real [\chi(p)\chi(q)^*]_{ii}=\real\sum_{k=1}^n [\chi(p)]_{ik}[\overline{\chi(q)}]_{ik}$ for arbitrary $i$, which gives the Euclidean inner product (likewise for $\real\bar pq$). So given $p=\sum_{i=0}^{n-1}a_ie_i$ and $q=\sum_{i=0}^{n-1}b_ie_i$, we now have
\begin{equation}
\label{eq:36}
\real p\bar q=\real\bar pq=\real\sum_{i=0}^{n-1}a_i\bar b_i.
\end{equation}

\subsection{Polar $n$-Bicomplex Matrices and Their Isomorphisms}
\label{subsec:bi-iso}

Analogously, we denote the set of $l\times m$ matrices with polar $n$-bicomplex entries by $\mathbb{CK}_n^{l\times m}$. The adjoint matrix of $\mathbf{A}\in\mathbb{CK}_n^{l\times m}$ can be defined similarly via $\chi_{lm}:\mathbb{CK}_n^{l\times m}\rightarrow\mathbb{C}^{ln\times mn}$:
\begin{equation}
\label{eq:37}
\chi_{lm}(\mathbf{A})=\begin{bmatrix}
\chi(A_{11}) & \chi(A_{12}) & \ldots & \chi(A_{1m})\\
\chi(A_{21}) & \chi(A_{22}) & \ldots & \chi(A_{2m})\\
\vdots & \vdots & \ddots & \vdots\\
\chi(A_{l1}) & \chi(A_{l2}) & \ldots & \chi(A_{lm})
\end{bmatrix}.
\end{equation}

Next we are going to show that the $\mathbb{C}$-linear map $\chi_{lm}(\mathbf{A}):\mathbb{C}^{mn}\rightarrow\mathbb{C}^{ln}$ operates in the same manner as the $\mathbb{CK}_n$-linear map $\mathbf{A}:\mathbb{CK}_n^m\rightarrow\mathbb{CK}_n^l$.

\begin{thm}
\label{thm:3}
Let $\mathbf{A}\in\mathbb{CK}_n^{l\times m}$. Then we have:
\begin{enumerate}
\item $\chi_{mm}(\mathbf{I}_m)=\mathbf{I}_{mn}$ if $\mathbf{I}_m\in\mathbb{CK}_n^{m\times m}$;
\item $\chi_{lr}(\mathbf{A}\mathbf{B})=\chi_{lm}(\mathbf{A})\chi_{mr}(\mathbf{B})$ if $\mathbf{B}\in\mathbb{CK}_n^{m\times r}$;
\item $\chi_{lm}(\mathbf{A}+\mathbf{B})=\chi_{lm}(\mathbf{A})+\chi_{lm}(\mathbf{B})$ if $\mathbf{B}\in\mathbb{CK}_n^{l\times m}$;
\item $\chi_{lm}(\mathbf{A}^*)=\chi_{lm}(\mathbf{A})^*$;
\item $\chi_{lm}(\mathbf{A}^{-1})=\chi_{lm}(\mathbf{A})^{-1}$ if it exists.
\end{enumerate}
\end{thm}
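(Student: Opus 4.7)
The plan is to mirror the proof of Theorem~\ref{thm:1} almost verbatim, since the only change between $\mathbb{K}_n$ and $\mathbb{CK}_n$ is that the scalar coefficients are complex rather than real, while the block structure of $\chi_{lm}$ in \eqref{eq:37} is identical to that in \eqref{eq:23}. Crucially, the paper has already asserted that the element-level map $\chi:\mathbb{CK}_n\to\mathbb{C}^{n\times n}$ is a ring isomorphism and still satisfies \eqref{eq:14}--\eqref{eq:17}; this is the only algebraic input required.

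First I would dispatch items~1, 3, and~4 by direct substitution into \eqref{eq:37}: identity blocks on the diagonal give $\mathbf{I}_{mn}$ using \eqref{eq:14}; entrywise addition commutes with $\chi$ by \eqref{eq:16}; and taking the conjugate transpose of the block matrix in \eqref{eq:37} permutes the blocks and conjugate-transposes each one, which by \eqref{eq:18} equals $\chi_{lm}(\mathbf{A}^*)$. These are mechanical checks and follow line-for-line from the $\mathbb{K}_n$ case, the only subtlety being that now $\chi(p)^*$ involves complex conjugation of the entries of the circulant block (so one must confirm that the polar $n$-bicomplex conjugation is compatible with this; this compatibility is exactly what equation \eqref{eq:18} provides in the bicomplex setting).

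Next I would establish item~2, multiplicativity, by writing out $\mathbf{AB}$ entrywise, pulling $\chi$ inside both the sum and the product using \eqref{eq:16} and \eqref{eq:15}, and then recognising the resulting expression as the block product $\chi_{lm}(\mathbf{A})\chi_{mr}(\mathbf{B})$. This is the same three-line chain of equalities displayed in the proof of Theorem~\ref{thm:1}, and no new idea is needed. Finally, item~5 drops out for free: apply $\chi_{mm}$ to the identity $\mathbf{AA}^{-1}=\mathbf{I}_m$, use items~1 and~2 to get $\chi_{mm}(\mathbf{A})\,\chi_{mm}(\mathbf{A}^{-1})=\mathbf{I}_{mn}$, and invoke uniqueness of the matrix inverse in $\mathbb{C}^{mn\times mn}$.

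There is essentially no substantive obstacle; the hardest part is purely bookkeeping, namely making sure the complexification does not break \eqref{eq:15} and \eqref{eq:18} at the scalar level. Since $\chi$ is $\mathbb{C}$-linear in the coefficients $a_i$ (each entry of $\chi(p)$ is literally one of the $a_i\in\mathbb{C}$), and since the $\mathbb{K}_n$ proofs of \eqref{eq:15} only used the ring structure of the coefficients, everything transfers. I would therefore write the proof as a short paragraph observing that \eqref{eq:37} has exactly the same block form as \eqref{eq:23}, that \eqref{eq:14}--\eqref{eq:17} hold verbatim in $\mathbb{CK}_n$, and that consequently the proof of Theorem~\ref{thm:1} applies word for word, referring the reader back to that proof rather than reproducing the display.
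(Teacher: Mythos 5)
Your proposal is correct and matches the paper exactly: the paper's own proof of this theorem is literally the single line ``See Theorem~\ref{thm:1},'' and your argument simply spells out why that reference suffices (the block form of \eqref{eq:37} is identical to \eqref{eq:23}, the scalar-level isomorphism properties \eqref{eq:14}--\eqref{eq:18} persist over complex coefficients, so the earlier proof transfers word for word). Nothing further is needed.
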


\begin{proof}
See Theorem \ref{thm:1}.
\end{proof}

The polar $n$-bicomplex SVD, inner product and Frobenius norm can be defined following \eqref{eq:28}, \eqref{eq:29} and \eqref{eq:30}. The illustration in Table~\ref{tab:1} still applies. The additional isomorphisms are defined via $\xi:\mathbb{CK}_n^{l\times m}\rightarrow\mathbb{R}^{l\times 2mn}$ and $\nu:\mathbb{CK}_n^{l\times m}\rightarrow\mathbb{R}^{2lmn}$:
\begin{gather}
\label{eq:38}
\xi(\mathbf{A})=[\real\imag_0\mathbf{A},\imag\imag_0\mathbf{A},\ldots,\imag\imag_{n-1}\mathbf{A}]\\
\label{eq:39}
\nu(\mathbf{A})=\vect\xi(\mathbf{A}).
\end{gather}

\begin{prop}
\label{thm:4}
If $\mathbf{A},\mathbf{B}\in\mathbb{CK}_n^{l\times m}$, then the following holds:
\begin{enumerate}
\item $\langle\mathbf{A},\mathbf{B}\rangle=\real\mathrm{tr}(\mathbf{A}^*\mathbf{B})=\nu(\mathbf{A})^T\nu(\mathbf{B})$;
\item $\|\mathbf{A}\|_F^2=\sum_i|\sigma_i(\mathbf{A})|$,
\end{enumerate}
where $\sigma_i(\mathbf{A})$ are the singular values of $\mathbf{A}$.
\end{prop}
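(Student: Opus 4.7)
The plan is to mirror the proof of Proposition \ref{thm:2} essentially line-by-line, with the only real adjustment being to track the extra complex dimension introduced by allowing the coefficients $a_i$ in \eqref{eq:33} to live in $\mathbb{C}$ rather than $\mathbb{R}$. Both parts should follow from the fact that the isomorphism of Theorem \ref{thm:3} preserves exactly the structural identities that were used in the $\mathbb{K}_n$ case, together with the bicomplex analogue \eqref{eq:36} of \eqref{eq:20}.

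For part 1, I would start from the definition $\langle\mathbf{A},\mathbf{B}\rangle=\real\mathrm{tr}(\mathbf{AB}^*)$ given in \eqref{eq:29} and expand the trace entrywise to obtain $\real\sum_{i,k}A_{ik}\overline{B_{ik}}$, where each $A_{ik},B_{ik}\in\mathbb{CK}_n$. The first equality $\real\mathrm{tr}(\mathbf{AB}^*)=\real\mathrm{tr}(\mathbf{A}^*\mathbf{B})$ is then immediate from the scalar identity $\real p\bar q=\real\bar pq$ in \eqref{eq:36} applied termwise. For the second equality, I would write each coefficient $a_{iks}\in\mathbb{C}$ of $A_{ik}=\sum_s a_{iks}e_s$ (and similarly for $B_{ik}$) and use \eqref{eq:36} to get $\real A_{ik}\overline{B_{ik}}=\sum_{s}\bigl(\real a_{iks}\real b_{iks}+\imag a_{iks}\imag b_{iks}\bigr)$. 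Summing over $i,k$ is precisely the Euclidean dot product of the two real vectors $\nu(\mathbf{A})$ and $\nu(\mathbf{B})$, since the map $\nu$ defined in \eqref{eq:38}--\eqref{eq:39} lists exactly those real and imaginary parts in a fixed consistent order. A corollary is $\|\mathbf{A}\|_F^2=\sum_{i,k}|A_{ik}|^2$.

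For part 2, I would invoke unitary invariance of the Frobenius norm. Using the SVD $\mathbf{A}=\mathbf{U}\mathbf{\Sigma}\mathbf{V}^*$ produced blockwise in the CFT domain as in \eqref{eq:27}--\eqref{eq:28}, with $\mathbf{U}$ and $\mathbf{V}$ unitary in the polar $n$-bicomplex sense, compute $\|\mathbf{A}\|_F^2=\real\mathrm{tr}(\mathbf{A}^*\mathbf{A})=\real\mathrm{tr}(\mathbf{V}\mathbf{\Sigma}^*\mathbf{U}^*\mathbf{U}\mathbf{\Sigma}\mathbf{V}^*)=\real\mathrm{tr}(\mathbf{\Sigma}^*\mathbf{\Sigma})$, where cyclicity of the trace and $\mathbf{U}^*\mathbf{U}=\mathbf{V}^*\mathbf{V}=\mathbf{I}$ (both of which transfer from the matrix side via Theorem \ref{thm:3}) are the only facts required. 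Since $\mathbf{\Sigma}$ is diagonal with polar $n$-bicomplex entries $\sigma_i(\mathbf{A})$, the corollary of part 1 immediately gives $\sum_i|\sigma_i(\mathbf{A})|^2$, matching the statement.

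I do not expect any serious obstacle, since the argument is genuinely structural and Theorem \ref{thm:3} carries through the relevant algebraic identities verbatim. The only place that needs care is the bookkeeping in part 1: one must verify that the particular ordering chosen in \eqref{eq:38}, which interleaves $\real$ and $\imag$ parts of each coefficient together with the $n$ imaginary slots, produces the same pairing of real and imaginary components of $a_{iks}$ and $b_{iks}$ as appears in the expansion of $\real\sum_{i,k,s}a_{iks}\overline{b_{iks}}$. This is just a matter of matching indices rather than a genuine obstruction.
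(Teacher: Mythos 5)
Your proposal is correct and takes essentially the same route as the paper, whose proof of this proposition is literally ``See Proposition~\ref{thm:2}'': part~1 by expanding $\real\mathrm{tr}(\mathbf{AB}^*)=\real\sum_{i,k}A_{ik}\bar B_{ik}$ and applying the scalar identity \eqref{eq:36} termwise to recover the Euclidean pairing under $\nu$, and part~2 by unitary invariance of the Frobenius norm. One small remark: you derive $\sum_i|\sigma_i(\mathbf{A})|^2$ and call it ``matching the statement,'' whereas the statement as printed reads $\sum_i|\sigma_i(\mathbf{A})|$ without the square --- but this exponent discrepancy is already present between the paper's own statement and proof of Proposition~\ref{thm:2}, so your derivation agrees with what the paper actually proves.
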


\begin{proof}
See Proposition \ref{thm:2}.
\end{proof}

\section{Polar $n$-Complex and $n$-Bicomplex PCP}
\label{sec:pcp}

PCP algorithms \cite{Candes11,Huang12} are traditionally implemented by proximal optimization \cite{Combettes11} which extends gradient projection to the nonsmooth case. Often, closed-form solutions for the proximity operators are available, like soft-thresholding \cite{Donoho95} and singular value thresholding \cite{Cai10} in the real-valued case.

\subsection{Equivalence to Real-Valued Proximal Methods}
\label{subsec:equiv}

To fix our notation, recall that the proximity operator of a function $f:\mathbb{R}^m\rightarrow\mathbb{R}^m$ is traditionally defined as \cite{Combettes11}:
\begin{equation}
\label{eq:40}
\mathrm{prox}_f\mathbf{z}=\arg\min_\mathbf{x}\frac{1}{2}\|\mathbf{z}-\mathbf{x}\|_2^2+f(\mathbf{x}),\ \mathbf{x}\in\mathbb{R}^m.
\end{equation}
For $\mathbf{x}\in\mathbb{K}_n^m$ or $\mathbb{CK}_n^m$ we can use $\nu(\mathbf{x})$ instead of $\mathbf{x}$ and adjust $f(\mathbf{x})$ accordingly. As $\|\mathbf{z}-\mathbf{x}\|_2^2$ is invariant under this transformation, we can equivalently extend the domain of $f$ to $\mathbb{K}_n^m$ or $\mathbb{CK}_n^m$ without adjusting $f(\mathbf{x})$ in the following. This equivalence establishes the validity of proximal minimization using polar $n$-complex and $n$-bicomplex matrices directly, without needing to convert to the real domain temporarily.

\subsection{The Proximity Operator for the $\ell_1$ Norm}
\label{subsec:prox_1}

We deal with the $\ell_1$- and trace-norm regularizers in order.
\begin{lem}[Yuan and Lin \cite{Yuan06}]
\label{lem:5}
Let \{$\mathbf{x}_{(1)},\ldots,\mathbf{x}_{(m)}$\} be a partition of $\mathbf{x}$ such that $\mathbf{x}=\bigcup_{i=1}^m\,\mathbf{x}_{(i)}$. The proximity operator for the group lasso regularizer $\lambda\sum_{i=1}^m\|\mathbf{x}_{(i)}\|_2$ is
\begin{equation}
\label{eq:41}
\mathrm{prox}_{\lambda\sum\|\cdot\|_2}\mathbf{z}=\left[\left(1-\frac{\lambda}{\|\mathbf{z}_{(i)}\|_2}\right)_+\mathbf{z}_{(i)}\right]_{i=1}^m,\ \mathbf{z}\in\mathbb{R}^r,
\end{equation}
where $(\mathbf{y})_+$ denotes $\max(0,\mathbf{y})$, $[\mathbf{y}_i]_{i=1}^m=[\mathbf{y}_1^T,\ldots,\mathbf{y}_m^T]^T$ is a real column vector, and $r$ is the sum of the sizes of $\mathbf{x}_{(\cdot)}$.
\end{lem}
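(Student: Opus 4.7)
The plan is to exploit separability: because $\|\mathbf{z}-\mathbf{x}\|_2^2=\sum_{i=1}^m\|\mathbf{z}_{(i)}-\mathbf{x}_{(i)}\|_2^2$ splits along the partition and the regularizer $\lambda\sum_{i=1}^m\|\mathbf{x}_{(i)}\|_2$ splits in exactly the same way, the minimization in \eqref{eq:40} decouples into $m$ independent problems, one per group. I would first state this decoupling explicitly, so that the task reduces to proving the single-block (vector) soft-thresholding identity
\begin{equation*}
\mathrm{prox}_{\lambda\|\cdot\|_2}(\mathbf{z}_{(i)})=\left(1-\frac{\lambda}{\|\mathbf{z}_{(i)}\|_2}\right)_+\mathbf{z}_{(i)}.
\end{equation*}

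For the single-block proof, I would use Fermat's rule on the convex objective $\tfrac{1}{2}\|\mathbf{z}-\mathbf{x}\|_2^2+\lambda\|\mathbf{x}\|_2$, writing the optimality condition as $\mathbf{0}\in\mathbf{x}-\mathbf{z}+\lambda\,\partial\|\mathbf{x}\|_2$ and recalling that $\partial\|\mathbf{x}\|_2=\{\mathbf{x}/\|\mathbf{x}\|_2\}$ when $\mathbf{x}\neq\mathbf{0}$ and equals the closed unit ball $\{\mathbf{u}:\|\mathbf{u}\|_2\leq 1\}$ at $\mathbf{x}=\mathbf{0}$. Two cases then exhaust the analysis: if $\|\mathbf{z}\|_2\leq\lambda$, the choice $\mathbf{x}=\mathbf{0}$ satisfies the inclusion with the subgradient $\mathbf{u}=\mathbf{z}/\lambda$; if $\|\mathbf{z}\|_2>\lambda$, the optimum is nonzero, so the equation $\mathbf{x}(1+\lambda/\|\mathbf{x}\|_2)=\mathbf{z}$ forces $\mathbf{x}$ to be parallel to $\mathbf{z}$, and taking norms gives $\|\mathbf{x}\|_2=\|\mathbf{z}\|_2-\lambda$. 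Combining the two cases yields the $(\cdot)_+$ shrinkage factor.

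Reassembling the per-group minimizers into the concatenation $[\cdot]_{i=1}^m$ finishes the argument. The only step requiring any care is case 2, specifically verifying that $\mathbf{x}$ is necessarily collinear with $\mathbf{z}$ and that the resulting norm is positive (which is guaranteed exactly by the hypothesis $\|\mathbf{z}\|_2>\lambda$, ensuring consistency of the derivation); the rest is routine. Since the lemma is quoted verbatim from \cite{Yuan06}, I expect this proof to be brief or omitted in the paper, with the separability remark being the only genuinely new observation worth spelling out.
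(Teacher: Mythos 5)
Your argument is correct and complete: the objective separates across the blocks of the partition, and the per-block subdifferential analysis (unit ball at $\mathbf{0}$, $\mathbf{x}/\|\mathbf{x}\|_2$ otherwise, collinearity forced by $\mathbf{x}(1+\lambda/\|\mathbf{x}\|_2)=\mathbf{z}$) yields exactly the block soft-thresholding factor $\left(1-\lambda/\|\mathbf{z}_{(i)}\|_2\right)_+$. The paper itself gives no proof at all---it merely cites Yuan--Lin and Tomioka et al.\ as a standard result---so there is nothing to compare against; your derivation is precisely the standard one found in those references, and your anticipation that the paper would omit it was accurate.
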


\begin{proof}
This result is standard in sparse coding \cite{Yuan06,Tomioka12}.
\end{proof}

The group lasso is a variant of sparse coding that promotes group sparsity, i.e., zeroing entire groups of variables at once or not at all. When we put the real and imaginary parts of a polar $n$-complex or $n$-bicomplex number in the same group, group sparsity makes sense, since a number cannot be zero unless all its constituent parts are zero, as in the next theorem.

\begin{thm}
\label{thm:6}
The polar $n$-complex or $n$-bicomplex lasso
\begin{equation}
\label{eq:42}
\min_\mathbf{x}\frac{1}{2}\|\mathbf{z}-\mathbf{x}\|_2^2+\lambda\|\mathbf{x}\|_1,\ \mathbf{z},\mathbf{x}\in F^m,
\end{equation}
where $F$ is $\mathbb{K}_n$ or $\mathbb{CK}_n$, is equivalent to the group lasso
\begin{equation}
\label{eq:43}
\min_{\xi(\mathbf{x})}\frac{1}{2}\|\xi(\mathbf{z-x})\|_F^2+\lambda\|\xi(\mathbf{x})\|_{1,2},
\end{equation}
where $\|\mathbf{A}\|_{1,2}$ is defined as $\sum_i\sqrt{\sum_k|A_{ik}|^2}$.
\end{thm}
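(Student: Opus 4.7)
The plan is to use the row-by-row isomorphism $\xi$ of Section~\ref{subsec:iso} to rewrite the hypercomplex lasso as a real-valued group lasso. Concretely, $\xi$ is an $\mathbb{R}$-linear bijection from $F^m$ onto a real matrix space in which (i) the hypercomplex Euclidean norm becomes the Frobenius norm, and (ii) the hypercomplex entrywise $\ell_1$ norm becomes exactly the $\ell_{1,2}$ norm of the theorem, with the groups given by the rows of $\xi(\mathbf{x})$. Once these two isometric properties are verified, bijectivity lets one rewrite \eqref{eq:42} as \eqref{eq:43} term-by-term, finishing the proof.

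First, I would observe from \eqref{eq:31} and \eqref{eq:38} that $\xi$ is $\mathbb{R}$-linear and bijective: it simply lists, for each entry $x_i$, its $n$ real coefficients (for $\mathbb{K}_n$) or its $2n$ real and imaginary parts (for $\mathbb{CK}_n$) along row $i$. In particular $\xi(\mathbf{z}-\mathbf{x})=\xi(\mathbf{z})-\xi(\mathbf{x})$, so minimising over $\mathbf{x}\in F^m$ is equivalent to minimising over $\xi(\mathbf{x})$. Second, applying Proposition~\ref{thm:2}(1) (resp.\ Proposition~\ref{thm:4}(1)) to $m\times 1$ matrices gives $\|\mathbf{z}-\mathbf{x}\|_2^2=\sum_i|z_i-x_i|^2$, and the modulus formulas \eqref{eq:5} and \eqref{eq:34} unfold $|z_i-x_i|^2$ into the sum of squared real entries in row $i$ of $\xi(\mathbf{z}-\mathbf{x})$; summing over $i$ yields $\|\mathbf{z}-\mathbf{x}\|_2^2=\|\xi(\mathbf{z}-\mathbf{x})\|_F^2$. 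Third, the same modulus formulas give $|x_i|=\sqrt{\sum_k|(\xi(\mathbf{x}))_{ik}|^2}$, so $\|\mathbf{x}\|_1=\sum_i|x_i|=\|\xi(\mathbf{x})\|_{1,2}$ by the theorem's definition of $\|\cdot\|_{1,2}$.

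Chaining the three equalities matches the objective in \eqref{eq:42} with the objective in \eqref{eq:43}, so the two problems share optimal values and their minimisers correspond under the bijection $\mathbf{x}\leftrightarrow\xi(\mathbf{x})$. I do not foresee a real obstacle; the argument is largely bookkeeping. The step that most deserves care is the bicomplex case, where $\xi$ emits $2n$ real columns per entry: one must check that the row-wise Euclidean norm in $\mathbb{R}^{2n}$ still reproduces the bicomplex modulus $|x_i|$ of \eqref{eq:34}, which it does because $|a_{i,k}|^2=(\real a_{i,k})^2+(\imag a_{i,k})^2$ for every complex coefficient $a_{i,k}$ of $x_i$, so that collecting real and imaginary parts into $2n$ slots and then taking the Euclidean norm recovers the same quantity as the sum of squared complex moduli.
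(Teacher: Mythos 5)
Your proposal is correct and follows essentially the same route as the paper's proof: decompose the objective entrywise, identify the hypercomplex modulus $|x_i|$ with the Euclidean norm of the row $\xi(x_i)$ via \eqref{eq:5}/\eqref{eq:31} (resp.\ \eqref{eq:34}/\eqref{eq:38}), and reassemble the sums into the Frobenius and $\ell_{1,2}$ norms. Your extra care in spelling out the $\mathbb{R}$-linearity and bijectivity of $\xi$ and in checking the bicomplex case explicitly is a welcome but inessential elaboration of the paper's one-display argument.
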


\begin{proof}
The proof is straightforward:
\begin{align*}
\frac{1}{2}\|\mathbf{z}-\mathbf{x}\|_2^2+\lambda\|\mathbf{x}\|_1&=\sum_i\frac{1}{2}|z_i-x_i|^2+\lambda|x_i|\\
&=\sum_i\frac{1}{2}\|\xi(z_i-x_i)\|_2^2+\lambda\|\xi(x_i)\|_2\\
&=\frac{1}{2}\|\xi(\mathbf{z}-\mathbf{x})\|_F^2+\lambda\|\xi(\mathbf{x})\|_{1,2}.
\end{align*}
\end{proof}

The first line invokes the definitions of $|\cdot|$ in \eqref{eq:5} and \eqref{eq:34}, while the second line is due to the proposed isomorphisms in \eqref{eq:31} and \eqref{eq:38}. In other words, we have discovered a method to solve the novel polar $n$-complex or $n$-bicomplex lasso problem using real-valued group lasso solvers. By combining Lemma~\ref{lem:5} and Theorem~\ref{thm:6}, we arrive at the main result of this subsection.

\begin{cor}
\label{cor:7}
For the entrywise $\ell_1$-regularizer $\lambda\|X\|_1$, where $X,Z\in\mathbb{K}_n^{l\times m}$ or $\mathbb{CK}_n^{l\times m}$, we may treat $X$ as a long hypercomplex vector of length $lm$ without loss of generality. Simply assign each hypercomplex number to its own group $\mathbf{g}_i$, for all $1\leq i\leq lm$ numbers, and we obtain the proximity operator for $\lambda\|X\|_1$ using \eqref{eq:41}:
\begin{equation}
\label{eq:44}
\mathrm{prox}_{\lambda\|\cdot\|_1}^F\mathbf{z}=\left(1-\frac{\lambda}{|\mathbf{z}|}\right)_+\mathbf{z},\ \mathbf{z}\in F^{lm},
\end{equation}
where $F$ is $\mathbb{K}_n$ or $\mathbb{CK}_n$ and $\mathbf{z}=\vect Z$. Here $|\mathbf{z}|$ corresponds to the Euclidean norm in \eqref{eq:41} and the grouping should follow the definition of $\xi(\mathbf{A})$ for the respective algebra. Note how each entry corresponds to its real-isomorphic group $\xi(\cdot)$ here.
\end{cor}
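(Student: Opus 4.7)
My plan is to derive the corollary as an essentially immediate composition of Theorem~\ref{thm:6} and Lemma~\ref{lem:5}, with the only substantive content being the identification of a single hypercomplex entry's modulus with the Euclidean norm of its real-isomorphic image.

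First, I would vectorize: since the entrywise $\ell_1$-norm is invariant under reshaping, set $\mathbf{z}=\vect Z\in F^{lm}$ and $\mathbf{x}=\vect X\in F^{lm}$ so that $\|X\|_1=\|\mathbf{x}\|_1=\sum_{i=1}^{lm}|x_i|$ and $\|Z-X\|_F^2=\|\mathbf{z}-\mathbf{x}\|_2^2$. Consequently the original proximity problem
\begin{equation*}
\min_{X}\tfrac{1}{2}\|Z-X\|_F^2+\lambda\|X\|_1
\end{equation*}
is identical to the vector problem \eqref{eq:42} and therefore, by Theorem~\ref{thm:6}, to the real group-lasso problem \eqref{eq:43} under the isomorphism $\xi(\cdot)$.

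Second, I would choose the partition in Lemma~\ref{lem:5} so that each hypercomplex coordinate forms its own group: declare $\mathbf{g}_i := \xi(x_i)$ for $i=1,\dots,lm$, so that $\xi(\mathbf{x})$ is partitioned into $lm$ blocks and $\|\xi(\mathbf{x})\|_{1,2}=\sum_i\|\xi(x_i)\|_2$. The crucial identification here is $\|\xi(z_i)\|_2=|z_i|$, which holds by the definitions \eqref{eq:5} and \eqref{eq:34} of the hypercomplex modulus together with the isomorphisms \eqref{eq:31} and \eqref{eq:38} (in the $\mathbb{CK}_n$ case this uses $|a_k|^2=(\real a_k)^2+(\imag a_k)^2$). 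With this partition, the group-lasso proximity operator \eqref{eq:41} specializes to the entrywise update
\begin{equation*}
\xi(x_i)\;\mapsto\;\Bigl(1-\tfrac{\lambda}{\|\xi(z_i)\|_2}\Bigr)_{+}\xi(z_i)=\Bigl(1-\tfrac{\lambda}{|z_i|}\Bigr)_{+}\xi(z_i).
\end{equation*}

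Finally, I would pull this update back through the $\mathbb{R}$-linear isomorphism $\xi$: since $\xi$ commutes with scalar multiplication by real numbers, the update in the hypercomplex domain reads $x_i\mapsto(1-\lambda/|z_i|)_+\,z_i$, which, applied coordinatewise to $\mathbf{z}\in F^{lm}$, is exactly \eqref{eq:44}. The only step that requires any real verification is the modulus-to-Euclidean-norm identity for $\xi$; once that is in place, everything else is bookkeeping, so I do not anticipate a genuine obstacle. A minor care point is to be explicit that in the $\mathbb{CK}_n$ case the $\xi$-image has $2n$ real components per entry rather than $n$, but the Euclidean norm of that $2n$-vector still equals the modulus \eqref{eq:34}, so the formula is uniform across the two algebras.
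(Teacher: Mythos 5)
Your proposal is correct and follows exactly the paper's route: the paper derives Corollary~\ref{cor:7} by combining Theorem~\ref{thm:6} (reduction of the hypercomplex lasso to a real group lasso via $\xi$) with Lemma~\ref{lem:5}, assigning each hypercomplex entry to its own group and using $\|\xi(z_i)\|_2=|z_i|$. You simply spell out the bookkeeping (vectorization, the modulus identity, and pulling the update back through the $\mathbb{R}$-linear map $\xi$) that the paper leaves implicit.
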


\subsection{The Proximity Operator for the Trace Norm}
\label{subsec:prox_*}

Next we will treat the trace-norm regularizer. We begin our proof by quoting a classic textbook inequality. In what follows, $\sigma_i(\mathbf{A})$ denotes the singular values of $\mathbf{A}$.

\begin{lem}[von Neumann \cite{Horn13}]
\label{lem:8}
For any $\mathbf{A},\mathbf{B}\in\mathbb{C}^{l\times m}$, the von Neumann trace inequality holds:
\begin{equation}
\label{eq:45}
\real\mathrm{tr}(\mathbf{AB}^*)\leq\sum_i \sigma_i(\mathbf{A})\sigma_i(\mathbf{B}).
\end{equation}
\end{lem}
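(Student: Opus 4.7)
The plan is to reduce to diagonal matrices via the SVD, then combine the Birkhoff--von Neumann theorem with the rearrangement inequality. First, I would write $\mathbf{A}=\mathbf{U}_A\mathbf{\Sigma}_A\mathbf{V}_A^*$ and $\mathbf{B}=\mathbf{U}_B\mathbf{\Sigma}_B\mathbf{V}_B^*$, with both sets of singular values sorted in decreasing order along the diagonals, and use the cyclic property of the trace to get
\[
\real\mathrm{tr}(\mathbf{AB}^*)=\real\mathrm{tr}(\mathbf{\Sigma}_A\mathbf{P}\mathbf{\Sigma}_B^*\mathbf{Q}),
\]
where $\mathbf{P}=\mathbf{V}_A^*\mathbf{V}_B$ and $\mathbf{Q}=\mathbf{U}_B^*\mathbf{U}_A$ are unitary. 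In the rectangular case $l\neq m$, I would embed $\mathbf{\Sigma}_A$ and $\mathbf{\Sigma}_B$ into square matrices by zero-padding; only the first $\min(l,m)$ diagonal entries can be nonzero, so nothing is lost.

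Next, I would expand the trace and bound the real part entrywise,
\[
\real\mathrm{tr}(\mathbf{\Sigma}_A\mathbf{P}\mathbf{\Sigma}_B^*\mathbf{Q})\le\sum_{i,j}\sigma_i(\mathbf{A})\,\sigma_j(\mathbf{B})\,|P_{ij}|\,|Q_{ji}|,
\]
then apply the AM--GM inequality $|P_{ij}||Q_{ji}|\le\tfrac{1}{2}(|P_{ij}|^2+|Q_{ji}|^2)$ to decouple $\mathbf{P}$ and $\mathbf{Q}$. This reduces the problem to bounding $\sum_{i,j}\sigma_i(\mathbf{A})\sigma_j(\mathbf{B})|P_{ij}|^2$, and symmetrically the analogous expression in $Q$.

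The core step --- the one I expect to be the crux of the argument --- is this last bound. Since $\mathbf{P}$ is unitary, its entries satisfy $\sum_j|P_{ij}|^2=\sum_i|P_{ij}|^2=1$, so the matrix $[|P_{ij}|^2]$ is doubly stochastic. By the Birkhoff--von Neumann theorem it is a convex combination $\sum_k\lambda_k\Pi_k$ of permutation matrices $\Pi_k$, so
\[
\sum_{i,j}\sigma_i(\mathbf{A})\sigma_j(\mathbf{B})|P_{ij}|^2=\sum_k\lambda_k\sum_i\sigma_i(\mathbf{A})\sigma_{\pi_k(i)}(\mathbf{B}),
\]
and the rearrangement inequality --- applicable precisely because both singular value sequences were arranged in decreasing order at the outset --- bounds each inner sum by $\sum_i\sigma_i(\mathbf{A})\sigma_i(\mathbf{B})$. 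The same argument dispatches $[|Q_{ji}|^2]$ (also doubly stochastic, as the transpose of a doubly stochastic matrix), and averaging via the AM--GM step yields the stated inequality.

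The proof is mostly a reduction followed by classical theorems; the only genuine subtlety is the bookkeeping when $l\ne m$, and making sure the decreasing-order convention is set up at the SVD step so that the rearrangement inequality applies cleanly at the end. Since the result is textbook-standard (cited here to Horn \cite{Horn13}), a reader not interested in rederiving it can simply invoke the reference, which is what the authors have elected to do.
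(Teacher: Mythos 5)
Your proposal is correct, but it does considerably more than the paper, whose entire ``proof'' of Lemma~\ref{lem:8} is the one-line citation to Horn and Johnson; there is no argument in the paper to compare against step by step. What you have written out is the classical Birkhoff--von Neumann route (essentially Mirsky's proof of the von Neumann trace inequality): reduce to $\real\mathrm{tr}(\mathbf{\Sigma}_A\mathbf{P}\mathbf{\Sigma}_B^*\mathbf{Q})$ via the SVDs and cyclicity, expand to $\sum_{i,j}\sigma_i(\mathbf{A})\sigma_j(\mathbf{B})\real(P_{ij}Q_{ji})$, bound by moduli, decouple with AM--GM, and use the fact that $[|P_{ij}|^2]$ is doubly stochastic together with the rearrangement inequality. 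Every step checks out, including the trace expansion and the sign conditions needed for the entrywise bound (the $\sigma$'s are nonnegative, which is what lets you pass to $|P_{ij}||Q_{ji}|$). The one place that deserves a little more explicitness is the rectangular bookkeeping you already flagged: $\mathbf{P}$ is $m\times m$ and $\mathbf{Q}$ is $l\times l$, so to make the two doubly stochastic matrices the same size you should pad the smaller unitary with an identity block (keeping it unitary), or alternatively observe that only the indices $i,j\le\min(l,m)$ contribute and that the corresponding principal submatrices of $[|P_{ij}|^2]$ and $[|Q_{ji}|^2]$ are doubly substochastic, for which the subpermutation version of Birkhoff's theorem gives the same bound. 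Either fix is routine. Your closing observation is also apt: for the purposes of this paper the lemma is an imported textbook fact, and the authors' choice to cite rather than rederive is legitimate; your derivation is simply a self-contained substitute for that citation, not a divergence from the paper's logic.
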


\begin{proof}
This is a standard textbook result \cite{Horn13}.
\end{proof}

\begin{thm}
\label{thm:9}
For any $\mathbf{A},\mathbf{B}\in\mathbb{K}_n^{l\times m}$ or $\mathbb{CK}_n^{l\times m}$, the following extension to the von Neumann inequality holds:
\begin{equation}
\label{eq:46}
\real\mathrm{tr}(\mathbf{AB}^*)\leq\sum_i|\sigma_i(\mathbf{A})||\sigma_i(\mathbf{B})|.
\end{equation}
\end{thm}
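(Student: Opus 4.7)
The plan is to deploy the classical von Neumann inequality (Lemma~\ref{lem:8}) in the CFT domain and then push the bound back to the polar algebra via a Parseval-type argument; the argument is identical for $\mathbb{K}_n$ and $\mathbb{CK}_n$ once the bicomplex $\real(\cdot)$ from \eqref{eq:35} is used. First I would pass to the adjoint representation via the scalar identity $\mathrm{tr}(\chi(p))=n\real p$ (each coefficient $a_0$ appears $n$ times on the diagonal of the circulant in \eqref{eq:12}). Combined with the multiplicativity and conjugate-transpose properties of Theorem~\ref{thm:1} (resp.\ Theorem~\ref{thm:3}), this gives $\real\mathrm{tr}(\chi_{lm}(\mathbf{A})\chi_{lm}(\mathbf{B})^*) = n\,\real\mathrm{tr}(\mathbf{AB}^*)$. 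Because the CFT is a unitary-plus-permutation conjugation, the trace is invariant under it, so the same quantity also equals $\sum_{j=1}^n \real\mathrm{tr}(\hat{\mathbf{A}}_j\hat{\mathbf{B}}_j^*)$ with the blocks of \eqref{eq:27}.

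Next I would apply Lemma~\ref{lem:8} blockwise to obtain $\real\mathrm{tr}(\hat{\mathbf{A}}_j\hat{\mathbf{B}}_j^*) \leq \sum_i \tau_{i,j}\tau'_{i,j}$, where $\tau_{i,j}$ denotes the $i$-th singular value of $\hat{\mathbf{A}}_j$ and $\tau'_{i,j}$ the corresponding quantity for $\mathbf{B}$. Swapping the order of summation and applying Cauchy-Schwarz on each length-$n$ tube $(\tau_{i,1},\ldots,\tau_{i,n})$ bounds the result by $\sum_i \sqrt{\sum_j\tau_{i,j}^2}\sqrt{\sum_j(\tau'_{i,j})^2}$. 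To finish I would invoke the Parseval identity $\sum_j \tau_{i,j}^2 = n|\sigma_i(\mathbf{A})|^2$, obtained by computing $\|\chi(\sigma_i(\mathbf{A}))\|_F^2$ in two ways: directly, each imaginary part of $\sigma_i(\mathbf{A})$ appears $n$ times in the circulant, giving $n|\sigma_i(\mathbf{A})|^2$; and via unitary diagonalization by $\mathbf{F}_n$, which shows it equals $\sum_j \tau_{i,j}^2$ (the squared eigenvalues of the corresponding CFT tube). Substitution yields $n\sum_i |\sigma_i(\mathbf{A})||\sigma_i(\mathbf{B})|$ on the right-hand side of the chain, and dividing by $n$ produces \eqref{eq:46}.

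The main bookkeeping obstacle will be tracking the two factors of $n$ — one produced by $\mathrm{tr}(\chi(\cdot))=n\real(\cdot)$ at the first step and another by Parseval at the last — and confirming that they cancel exactly rather than leaving a stray constant. Everything else reduces to a direct invocation of previously established isomorphism properties (Theorems~\ref{thm:1} and~\ref{thm:3}), the unitary invariance of the Frobenius norm and trace, Cauchy-Schwarz, and the classical von Neumann inequality itself.
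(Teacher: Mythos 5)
Your proposal is correct and follows essentially the same route as the paper: pass to the block-diagonalized CFT domain, apply the classical von Neumann inequality to each block, swap sums and use Cauchy--Schwarz on the tubes, and finish with Parseval. The only difference is that you explicitly track the two factors of $n$ (from $\mathrm{tr}(\chi(p))=n\real p$ and from $\sum_j\tau_{i,j}^2=n|\sigma_i(\mathbf{A})|^2$) and verify they cancel, a bookkeeping step the paper's terse ``Using Parseval's theorem, this theorem is proved'' leaves implicit.
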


\begin{proof}
This theorem embodies the key insight of this paper. Our novel discovery is that we can switch to the block-diagonalized CFT space to separate the sums and switch back:
\begin{align*}
\real\mathrm{tr}(\mathbf{\hat{A}\hat{B}}^*)&=\sum_{k=0}^{n-1}\real\mathrm{tr}(\mathbf{\hat{A}}_k\mathbf{\hat{B}}_k^*)\\
&\leq\sum_i\sum_{k=0}^{n-1}\sigma_i(\mathbf{\hat{A}}_k)\sigma_i(\mathbf{\hat{B}}_k)\\
&\leq\sum_i\sqrt{\sum_{k=0}^{n-1}\sigma_i^2(\mathbf{\hat{A}}_k)\sum_{k=0}^{n-1}\sigma_i^2(\mathbf{\hat{B}}_k})\\
&=\sum_i|\sigma_i(\mathbf{\hat{A}})||\sigma_i(\mathbf{\hat{B}})|,
\end{align*}
where $\mathbf{\hat{A}}=\mathrm{cft}(\mathbf{A})$ and $\mathbf{\hat{B}}=\mathrm{cft}(\mathbf{B})$, respectively. The second line is by Lemma \ref{lem:8} and the third line is due to the Cauchy-Schwarz inequality. Using Parseval's theorem, this theorem is proved.
\end{proof}

\begin{thm}
\label{thm:10}
The proximity operator for the polar $n$-complex or $n$-bicomplex trace norm $\lambda\sum_i|\sigma_i(\mathbf{X})|$, assuming $X,Z\in\mathbb{K}_n^{l\times m}$ or $\mathbb{CK}_n^{l\times m}$, is:
\begin{equation}
\label{eq:47}
\mathrm{prox}_{\lambda\|\cdot\|_*}\mathbf{z}=\vect\mathbf{U}\left[\left(1-\frac{\lambda}{\mathbf{|\Sigma}|}\right)_+\circ\mathbf{\Sigma}\right]\mathbf{V}^*,\ \mathbf{z}\in F^{lm},
\end{equation}
where $\mathbf{z}=\vect\mathbf{Z}$, $\mathbf{U\Sigma V}^*$ is the SVD of $\mathbf{Z}$ with singular values $\mathbf{\Sigma}_{ii}=\sigma_i(\mathbf{Z})$, the absolute value of $\mathbf{\Sigma}$ is computed entrywise, and $F$ is $\mathbb{K}_n$ or $\mathbb{CK}_n$.
\end{thm}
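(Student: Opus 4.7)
The plan is to adapt the standard proof of singular value soft-thresholding to the hypercomplex setting, leveraging the extended von Neumann inequality in Theorem \ref{thm:9} and the Frobenius-norm identity in Propositions \ref{thm:2} and \ref{thm:4}. Since $\nu$ is an isometry (as noted in Section \ref{subsec:equiv}), the outer $\vect$ in \eqref{eq:47} is cosmetic, and it suffices to solve the matrix problem
\begin{equation*}
\min_{\mathbf{X}}\ \tfrac{1}{2}\|\mathbf{Z}-\mathbf{X}\|_F^2 + \lambda\sum_i|\sigma_i(\mathbf{X})|
\end{equation*}
directly over $\mathbb{K}_n^{l\times m}$ or $\mathbb{CK}_n^{l\times m}$. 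I would first expand $\|\mathbf{Z}-\mathbf{X}\|_F^2 = \|\mathbf{Z}\|_F^2 - 2\real\mathrm{tr}(\mathbf{Z}\mathbf{X}^*) + \|\mathbf{X}\|_F^2$ using part~1 of Propositions \ref{thm:2} and \ref{thm:4}, and then rewrite both Frobenius norms in terms of singular-value magnitudes via part~2 of the same propositions.

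Second, I would invoke Theorem \ref{thm:9} to bound the cross term via $-\real\mathrm{tr}(\mathbf{Z}\mathbf{X}^*) \geq -\sum_i|\sigma_i(\mathbf{Z})||\sigma_i(\mathbf{X})|$, after which the whole objective is lower-bounded by the decoupled sum
\begin{equation*}
\tfrac{1}{2}\sum_i\bigl(|\sigma_i(\mathbf{Z})| - |\sigma_i(\mathbf{X})|\bigr)^2 + \lambda\sum_i|\sigma_i(\mathbf{X})|.
\end{equation*}
Setting $s_i := |\sigma_i(\mathbf{X})| \geq 0$, each term is the classical real soft-thresholding problem in one variable, with unique minimum at $s_i^{\star} = (|\sigma_i(\mathbf{Z})| - \lambda)_+$.

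Third, I would verify that the claimed candidate $\tilde{\mathbf{X}} = \mathbf{U}[(1-\lambda/|\mathbf{\Sigma}|)_+ \circ \mathbf{\Sigma}]\mathbf{V}^*$, where $\mathbf{U\Sigma V}^*$ is the hypercomplex SVD of $\mathbf{Z}$, attains this lower bound. Because the rescaling factor $(1 - \lambda/|\sigma_i(\mathbf{Z})|)_+$ is a real scalar, the $i$-th hypercomplex singular value of $\tilde{\mathbf{X}}$ is $(1 - \lambda/|\sigma_i(\mathbf{Z})|)_+\sigma_i(\mathbf{Z})$, whose modulus equals $s_i^{\star}$ exactly, so the scalar minima are attained term-by-term. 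For the bound from Theorem \ref{thm:9} to be tight, I would trace back through its proof and check the two equality conditions: in each CFT block, $\tilde{\mathbf{X}}$ and $\mathbf{Z}$ must share left and right singular vectors (tightening standard von Neumann), and the vectors of per-block singular values of $\tilde{\mathbf{X}}$ and $\mathbf{Z}$ must be proportional across blocks for each index $i$ (tightening Cauchy-Schwarz). The candidate meets the first condition because it inherits the hypercomplex SVD factors $\mathbf{U},\mathbf{V}$ of $\mathbf{Z}$, and it meets the second because multiplication by a real scalar commutes with the CFT, so each per-block singular value of $\tilde{\mathbf{X}}$ is the corresponding one of $\mathbf{Z}$'s CFT block rescaled by the same real factor.

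I expect the Cauchy-Schwarz equality condition to be the main subtlety: it is this step that pins down $|\mathbf{\Sigma}|$ (as opposed to any per-block quantity) as the correct thresholding scale. Replacing the real scalar $(1 - \lambda/|\sigma_i(\mathbf{Z})|)_+$ with a non-real hypercomplex factor would break the block-wise proportionality and prevent the lower bound from being attained, so the specific form of \eqref{eq:47} is forced by this step.
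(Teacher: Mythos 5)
Your proposal is correct and follows essentially the same route as the paper's proof: expand $\|\mathbf{Z}-\mathbf{X}\|_F^2$ via Propositions \ref{thm:2} and \ref{thm:4}, lower-bound the cross term with Theorem \ref{thm:9} to decouple the objective into scalar soft-thresholding problems on $|\sigma_i(\cdot)|$, and read off the minimizer. The only difference is that you explicitly verify attainment of the lower bound by checking the von Neumann and Cauchy--Schwarz equality conditions, a step the paper compresses into its appeal to Corollary \ref{cor:7}; this is a welcome elaboration rather than a different argument.
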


\begin{proof}
The proof follows \cite{Tomioka12} closely except that Theorem \ref{thm:9} allows us to extend the proof to the polar $n$-complex and $n$-bicomplex cases. Starting from the Euclidean inner product identity $\langle z-x,z-x\rangle=\langle z,z\rangle -2\langle z,x\rangle +\langle x,x\rangle$, which is applicable because of Propositions \ref{thm:2} and \ref{thm:4}, we have the following inequality:

\begin{align*}
\left\Vert\mathbf{Z}-\mathbf{X}\right\Vert_F^2&=\sum_i|\sigma_i(\mathbf{Z})|^2-2\left\langle\mathbf{Z},\mathbf{X}\right\rangle +\sum_i|\sigma_i(\mathbf{X})|^2\\
&\geq\sum_i|\sigma_i(\mathbf{Z})|^2-2|\sigma_i(\mathbf{Z})||\sigma_i(\mathbf{X})|+|\sigma_i(\mathbf{X})|^2\\
&=\sum_i\left(|\sigma_i(\mathbf{Z})|-|\sigma_i(\mathbf{X})|\right)^2,
\end{align*}
where Theorem \ref{thm:9} is invoked on the penultimate line. Thus:
\begin{align*}
\frac{1}{2}\left\|\mathbf{Z}-\mathbf{X}\right\|_F^2&+\lambda\sum_i|\sigma_i(\mathbf{X})|\\
\geq&\sum_i\frac{1}{2}\left(|\sigma_i(\mathbf{Z})|-|\sigma_i(\mathbf{X})|\right)^2+\lambda|\sigma_i(\mathbf{X})|\\
=&\frac{1}{2}\||\boldsymbol{\sigma}(\mathbf{Z})|-|\boldsymbol{\sigma}(\mathbf{X})|\|_2^2+\lambda\|\boldsymbol{\sigma}(\mathbf{X})\|_1,
\end{align*}
which is equivalent to a lasso problem on the (elementwise) modulus of the singular values of a polar $n$-complex or $n$-bicomplex matrix. By applying Corollary~\ref{cor:7} to the modulus of the singular values entrywise, the theorem is proved.
\end{proof}

Unlike the entrywise $\ell_1$-regularizer, the proximity operator in Theorem~\ref{thm:10} first operates on the entire matrix all at once. Once the SVD is computed, the absolute value of its singular values are then calculated entrywise (or real-isomorphic groupwise) to respect the properties of the underlying algebra.

\subsection{The Extended Formulations of PCP}
\label{subsec:form}

With the new proximal operators in \eqref{eq:44} and \eqref{eq:47}, we can finally define the polar $n$-complex and $n$-bicomplex PCP:
\begin{equation}
\label{eq:48}
\min_{\mathbf{L},\mathbf{S}}\|\mathbf{L}\|_*+\lambda\|\mathbf{S}\|_1\mbox{\quad s.t.\quad}\mathbf{X}=\mathbf{L}+\mathbf{S}\,,
\end{equation}
where $\mathbf{X}\in\mathbb{K}_n^{l\times m}$ for the polar $n$-complex PCP and $\mathbf{X}\in\mathbb{CK}_n^{l\times m}$ for the polar $n$-bicomplex PCP. We can solve this by the same algorithms in \cite{Lin09}, except that we should replace the soft-thresholding function:
\begin{equation}
\label{eq:49}
\mathcal{S}_\lambda[x]=\left\{\begin{array}{ll}
x-\lambda,&\mbox{if }x>\lambda,\\
x+\lambda,&\mbox{if }x<-\lambda,\\
0,&\text{otherwise}
\end{array}\right.
\end{equation}
with $\mathrm{prox}_{\lambda\|\cdot\|_1}^{\mathbb{K}_n}\mathbf{z}$ and $\mathrm{prox}_{\lambda\|\cdot\|_1}^{\mathbb{CK}_n}\mathbf{z}$ for the polar $n$-complex and $n$-bicomplex PCP, respectively. The inexact augmented Lagrange multiplier (IALM) method, also known as alternating direction method of multipliers, is well-established in the literature and its convergence has long been proven \cite{Lions79,Eckstein92,Kontogiorgis98}. Our adaptation is shown in Algorithm \ref{alg:2}. As the constraint $\mathbf{X}=\mathbf{L}+\mathbf{S}$ only uses simple additions, which are elementwise by definition, IALM will continue to work without change (via Proposition~\ref{thm:2} and Proposition~\ref{thm:4}). In the original IALM formulation \cite{Lin09}, their choice of $\mathbf{Y}_1$ is informed by the dual problem, whereas their $\mu_k$'s are incremented geometrically to infinity. We will simply follow them here. In theory, any initial value would work, but good guesses would converge faster \cite{Lin09}. As for $\mu_k$, any increasing sequence can be used, so long as it satisfies the convergence assumptions $\sum_{k=1}^\infty \mu_{k+1}/\mu_k^2<\infty$ and $\lim_{k\rightarrow\infty} \mu_k(S_{k+1}-S_k)=0$ \cite{Lin09}. As both $\mathbb{K}_n$ and $\mathbb{CK}_n$ are isomorphic to the circulant algebra, the easiest option is to use Gleich's circulant algebra matrix (CAMAT) toolbox \cite{Gleich13} to implement the algorithms. However, CAMAT is slightly slow due to unnecessary conversions to and from frequency domain at each iteration, so we reimplement this algebra from scratch and entirely in the Fourier domain, via \eqref{eq:13}. See Algorithm \ref{alg:2b} for our optimized frequency domain implementation. The extra $\sqrt{n}$ scaling for the proximity operators is due to the fact that \textsc{Matlab}'s $\mathrm{fft}$ is unnormalized.\footnote{All the code for this paper (including Algorithms \ref{alg:2b} and \ref{alg:2c}) is available at http://mac.citi.sinica.edu.tw/ikala/code.html to support reproducibility.}

\begin{algorithm}[t]
\caption{Polar $n$-(Bi)complex PCP}
\begin{algorithmic}[1]
\label{alg:2}
\REQUIRE $\mathbf{X}\in F^{l\times m},F\in\{\mathbb{K}_n,\mathbb{CK}_n\}$, $\lambda\in\mathbb{R}$, $\boldsymbol{\mu}\in\mathbb{R}^\infty$
\ENSURE $\mathbf{L}_k$, $\mathbf{S}_k$
\STATE Let $\mathbf{S}_1=0$, $\mathbf{Y}_1=\mathbf{X}/\max\left(\|\mathbf{X}\|_2,\lambda^{-1}\|\mathbf{X}\|_\infty\right)$, $k=1$
\WHILE{not converged}
    \STATE $\mathbf{L}_{k+1}\gets\mathrm{prox}_{1/\mu_k\|\cdot\|_*}(\mathbf{X}-\mathbf{S}_k+\mu_k^{-1}\mathbf{Y}_k)$
    \STATE $\mathbf{S}_{k+1}\gets\mathrm{prox}_{\lambda/\mu_k\|\cdot\|_1}^F(\mathbf{X}-\mathbf{L}_{k+1}+\mu_k^{-1}\mathbf{Y}_k)$
    \STATE $\mathbf{Y}_{k+1}\gets\mathbf{Y}_k+\mu_k(\mathbf{X}-\mathbf{L}_{k+1}-\mathbf{S}_{k+1})$
    \STATE $k\gets k+1$
\ENDWHILE
\end{algorithmic}
\end{algorithm}

\begin{algorithm}[t]
\caption{Optimized Polar $n$-(Bi)complex PCP}
\begin{algorithmic}[1]
\label{alg:2b}
\REQUIRE $\mathbf{X}\in F^{l\times m},F\in\{\mathbb{K}_n,\mathbb{CK}_n\}$, $\lambda\in\mathbb{R}$, $\boldsymbol{\mu}\in\mathbb{R}^\infty$
\ENSURE $\mathbf{L}$, $\mathbf{S}$
\STATE Let $\mathbf{\hat{S}}=0$, $\mathbf{Y}=\mathbf{X}/\max\left(\|\mathbf{X}\|_2,\lambda^{-1}\|\mathbf{X}\|_\infty\right)$, $k=1$
\STATE ${\mathbf{\hat{X}}}\leftarrow\rm{fft}(\mathbf{X},n,3)$ \COMMENT{Applies $n$-point DFT to each tube.}
\STATE ${\mathbf{\hat{Y}}}\leftarrow\rm{fft}(\mathbf{Y},n,3)$
\WHILE{not converged}
    \STATE $\mathbf{\hat{Z}}\gets\mathbf{\hat{X}}-\mathbf{\hat{S}}+\mu_k^{-1}\mathbf{\hat{Y}}$
    \FOR{$i=1:n$}
        \STATE $[\mathbf{\hat{U}}_{::i},\mathbf{\hat{\Sigma}}_{::i},\mathbf{\hat{V}}_{::i}]\leftarrow\mathrm{svd}(\mathbf{\hat{Z}}_{::i})$
    \ENDFOR
    \STATE $\mathbf{\hat{\Sigma}}\gets\mathrm{prox}_{\sqrt{n}/\mu_k\|\cdot\|_1}^F \mathbf{\hat{\Sigma}}$
    \FOR{$i=1:n$}
        \STATE $\mathbf{\hat{L}}_{::i}=\mathbf{\hat{U}}_{::i}\mathbf{\hat{\Sigma}}_{::i}\mathbf{\hat{V}}_{::i}^*$
    \ENDFOR
    \STATE $\mathbf{\hat{S}}\gets\mathrm{prox}_{\lambda\sqrt{n}/\mu_k\|\cdot\|_1}^F(\mathbf{\hat{X}}-\mathbf{\hat{L}}+\mu_k^{-1}\mathbf{\hat{Y}})$
    \STATE $\mathbf{\hat{Y}}\gets\mathbf{\hat{Y}}+\mu_k(\mathbf{\hat{X}}-\mathbf{\hat{L}}-\mathbf{\hat{S}})$
    \STATE $k\gets k+1$
\ENDWHILE
\STATE $\mathbf{L}\leftarrow\mathrm{ifft}(\mathbf{\hat{L}},n,3)$
\STATE $\mathbf{S}\leftarrow\mathrm{ifft}(\mathbf{\hat{S}},n,3)$
\end{algorithmic}
\end{algorithm}

\begin{figure*}
\centering
\subfloat[]{\includegraphics[width=.55\columnwidth]{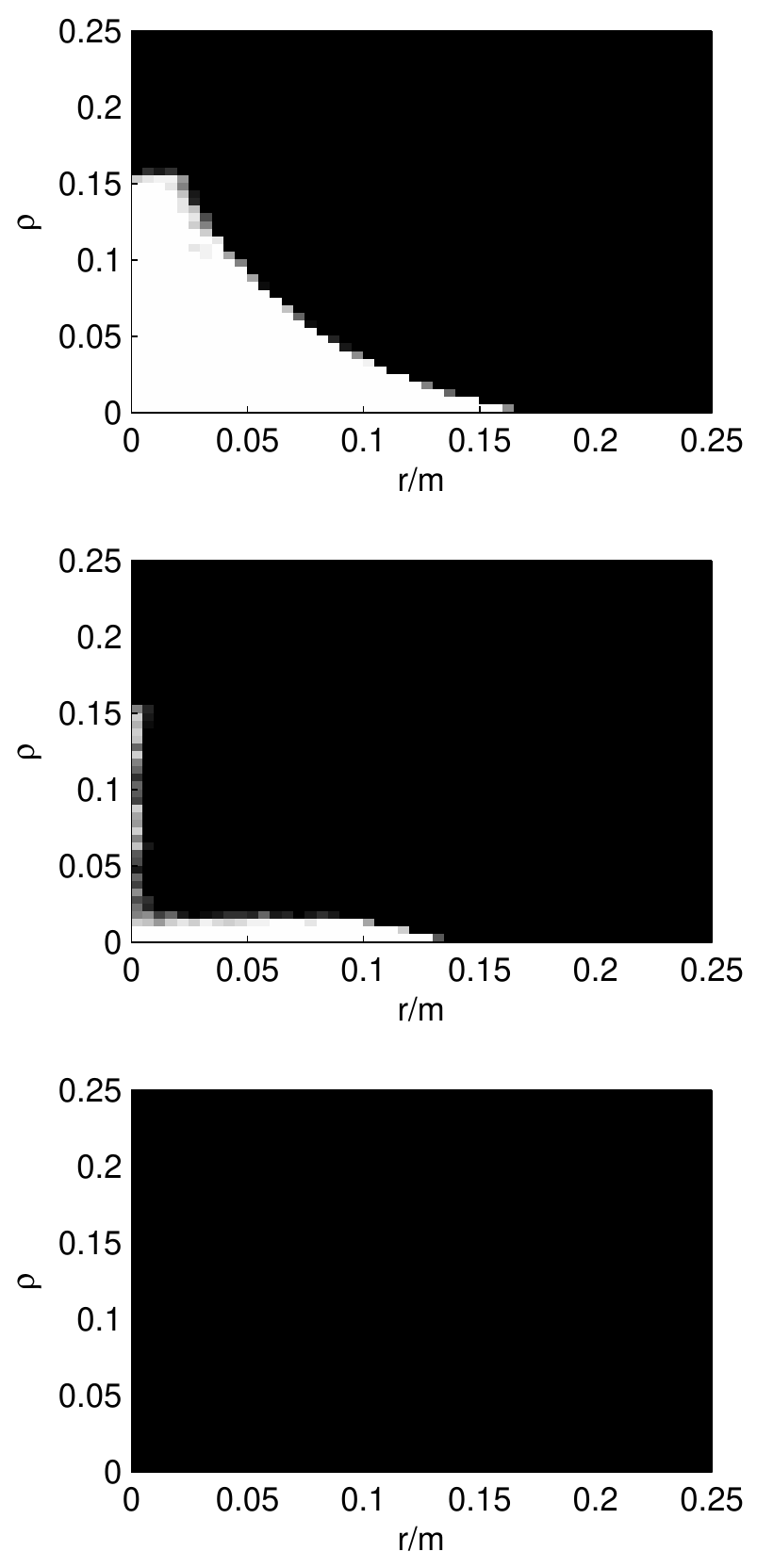}}
\subfloat[]{\includegraphics[width=.55\columnwidth]{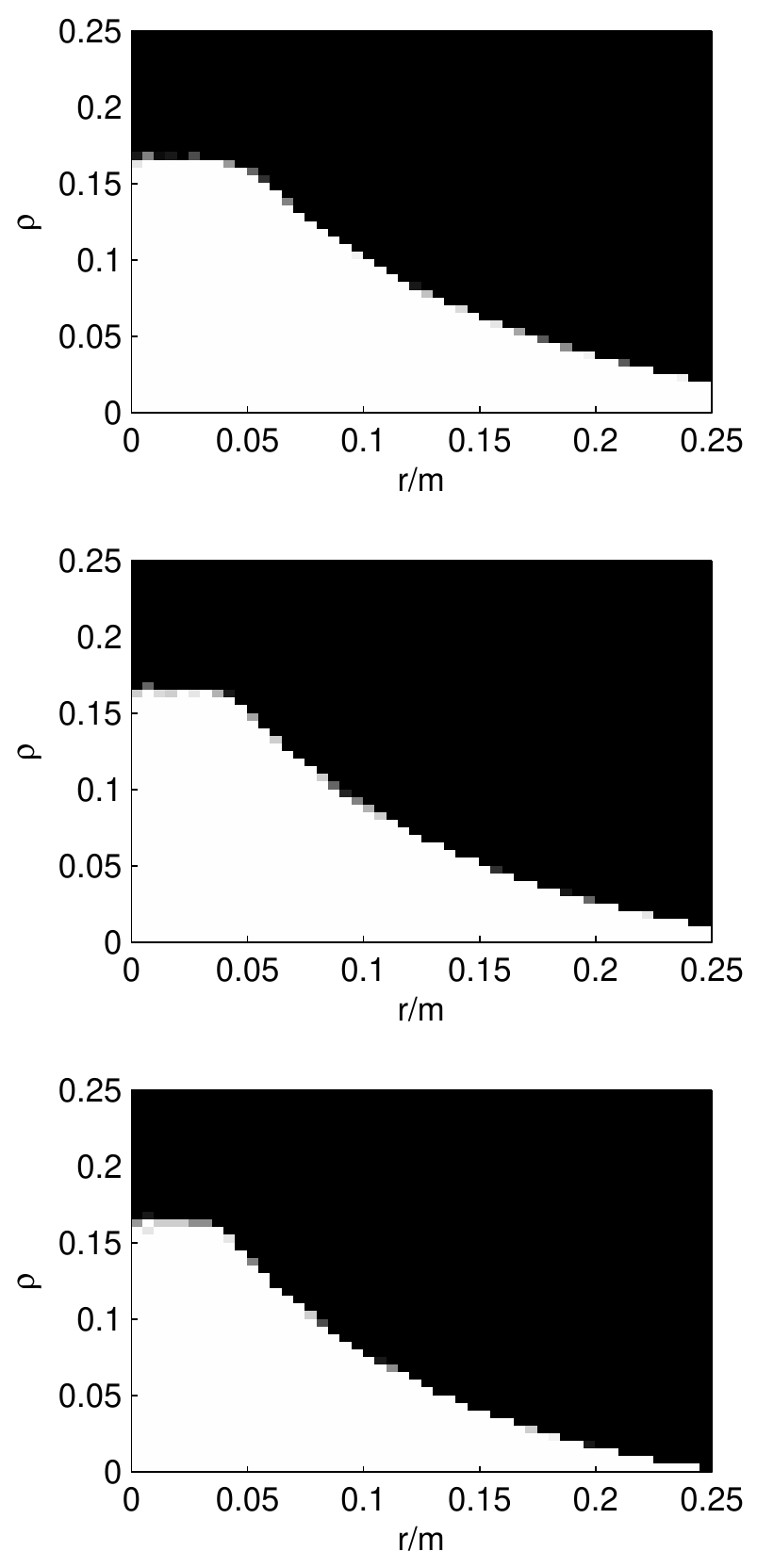}}
\subfloat[]{\includegraphics[width=.55\columnwidth]{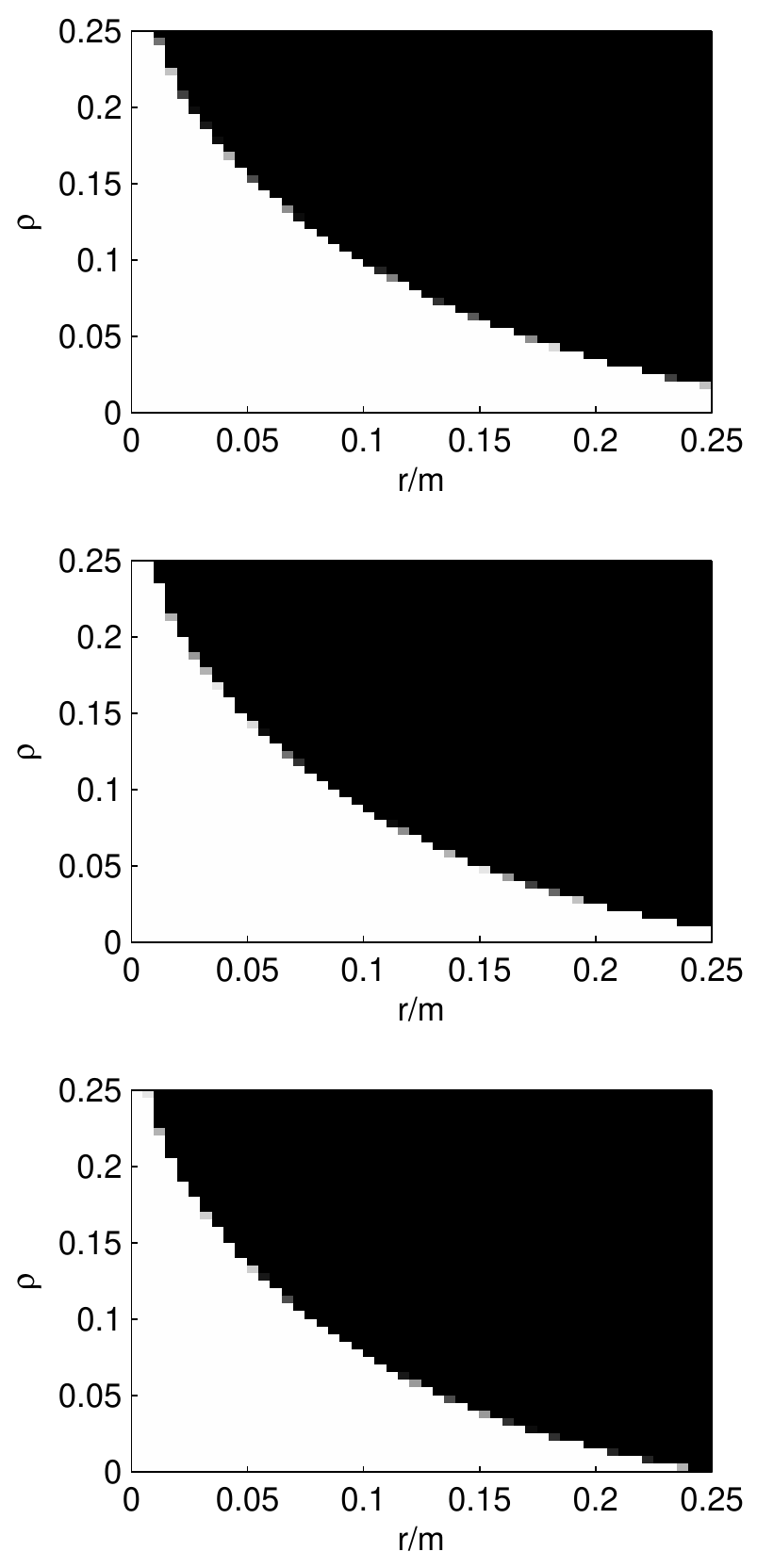}}
\caption{Recovery success rates for (a) polar $4$-complex embedding, (b) polar $2$-bicomplex embedding, and (c) quaternionic embedding. Matrix generation and success criteria are detailed in Section~\ref{sec:numerical}. From top to bottom: results for $\varepsilon=0.1$, $0.05$, and $0.01$, respectively. Grayscale color indicates the fraction of success (white denoting complete success, black denoting total failure).}
\label{fig:phasetrans}
\end{figure*}

\section{Numerical Simulations}
\label{sec:numerical}

To demonstrate the benefit of algebraic closure in polar $n$-bicomplex numbers (introduced in Section~\ref{sec:n-bicomplex}), we will numerically recover hypercomplex matrices of various ranks from additive noises with different levels of sparsity using hypercomplex PCP. Low-rank plus sparse matrices can be generated using Cand{\`e}s et al.'s $\mathbf{XY}^*+\mathbf{S}$ model \cite{Candes11}, where $\mathbf{X}$ and $\mathbf{Y}$ are $m\times r$ matrices with independent and identically distributed (i.i.d.) Gaussian entries from $\mathcal{N}(0,1/m)$, $\mathbf{S}$ is an $m\times m$ matrix with i.i.d. 0-1 entries from $\mathrm{Bernoulli}(\rho)$ multiplied by uniformly random signs, and $r$ and $\rho$ are the desired rank and sparsity, respectively. To accomodate complex coefficients, we instead use the complex normal distribution $\mathcal{CN}(0,\mathbf{I}/m)$ for $\mathbf{X}$ and $\mathbf{Y}$, and replace the random signs for $\mathbf{S}$ with unit-modulus complex numbers whose phases are uniformly distributed. Following \cite{Candes11}, we consider square matrices of size $m=400$. For each $(r,\rho)$ pair, we conduct 10 trials of the following simulation. In each trial, we generate two complex matrices, $\mathbf{M}_1=\mathbf{X}_1\mathbf{Y}_1^*+\mathbf{S}_1$ and $\mathbf{M}_2=\mathbf{X}_2\mathbf{Y}_2^*+\mathbf{S}_2$, using the complexified model described above. Then we embed the two complex matrices into one hypercomplex matrix by:

\begin{enumerate}
\item Polar $4$-complex embedding: the matrices are combined into $(\real\mathbf{M}_1)+(\imag\mathbf{M}_1)e_1+(\real\mathbf{M}_2)e_2+(\imag\mathbf{M}_2)e_3$.
\item Polar $2$-bicomplex embedding: the matrices are combined into $\mathbf{M}_1+\mathbf{M}_2e_1$.
\item Quaternionic embedding \cite{Chan16}: the matrices are combined into $\mathbf{M}_1+\mathbf{M}_2\jmath$.
\end{enumerate}

For each embedding, we perform PCP with a relative error tolerance of $10^{-7}$, as in \cite{Lin09}. We call the $\mathbf{M}_1$ part of the trial a success if the recovered low-rank solution $\mathbf{L}_1$ satisfies $\|\mathbf{L}_1-\mathbf{X}_1\mathbf{Y}_1^*\|_F/\|\mathbf{X}_1\mathbf{Y}_1^*\|_F<\varepsilon$. Likewise, the $\mathbf{M}_2$ part of the trial is deemed successful if the recovered $\mathbf{L}_2$ satisfies $\|\mathbf{L}_2-\mathbf{X}_2\mathbf{Y}_2^*\|_F/\|\mathbf{X}_2\mathbf{Y}_2^*\|_F<\varepsilon$.

The results are shown in Fig.~\ref{fig:phasetrans} for $\varepsilon=0.1$, $0.05$, and $0.01$. The color of each cell indicates the proportion of successful recovery for each $(r,\rho)$ pair across all 10 trials. Results suggest that quaternions and polar $2$-bicomplex numbers have comparable performance up to a sparsity of about $0.16$. Both markedly outperform polar $4$-complex numbers for all $\varepsilon$. As we decrease $\varepsilon$ to $0.01$, the polar $4$-complex numbers have completely failed while the other two are still working well. It may be argued that the quaternions are better than polar $2$-bicomplex numbers for sparsities above $0.16$, but their main weakness is that the dimensionality is fixed at $4$ so they are less flexible than polar $n$-bicomplex numbers in general. In summary, our simulations have provided clear evidence for the importance of algebraic closure in hypercomplex systems.

Next, we will use real data to test the practicality of our proposed algorithms.

\section{Experiments}
\label{sec:exp}

In this section, we use the singing voice separation (SVS) task to evaluate the effectiveness of the polar $n$-bicomplex PCP. SVS is an instance of blind source separation in the field of music signal processing, and its goal is to separate the singing voice component from an audio mixture containing both the singing voice and the instrumental accompaniment (see Fig.~\ref{fig:pcp-svs}). For applications such as singer modeling or lyric alignment \cite{Zhu13}, SVS has been shown an important pre-processing step for better performance. We consider SVS in this evaluation because PCP has been found promising for this particular task, showing that to a certain degree the magnitude spectrogram of pop music can be decomposed into a low-rank instrumental component and a sparse voice component \cite{Huang12}.

\begin{figure}[!t]
\centering
\includegraphics[width=\columnwidth]{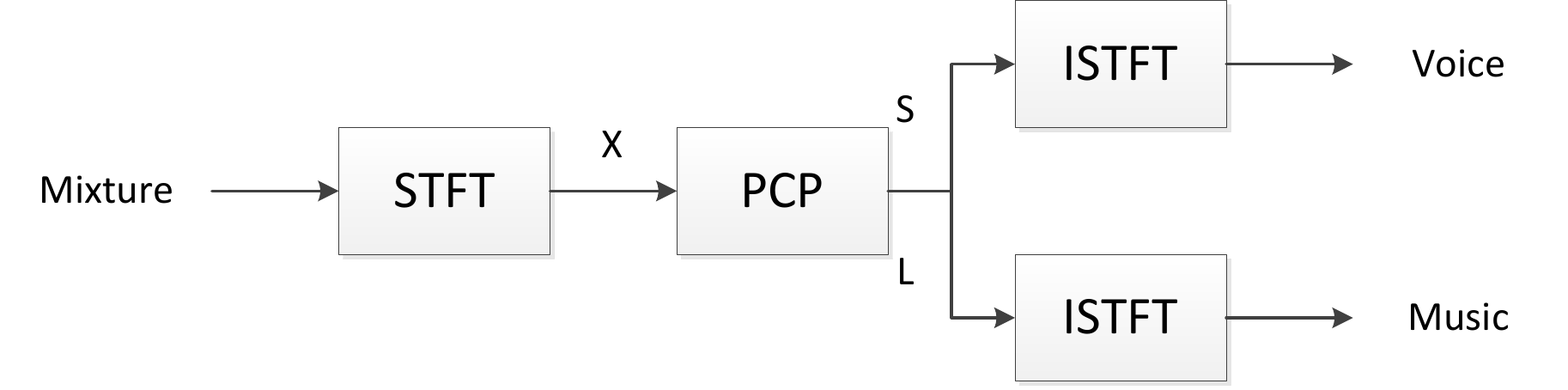}
\caption{Block diagram of a multichannel PCP-SVS system. For our experiments, PCP is either polar $n$-bicomplex PCP, polar $2n$-complex PCP, quaternionic PCP \cite{Chan16}, or tensor RPCA \cite{Zhang14}.}
\label{fig:pcp-svs}
\end{figure}

\subsection{Algorithms}

The following versions of PCP-SVS are compared:
\begin{enumerate}
\item Polar $n$-bicomplex PCP: the $n$-channel audio is represented using $\mathbf{X}_1e_0+\ldots+\mathbf{X}_ne_{n-1}$, where $\mathbf{X}_i$ contains the complex spectrogram for the $i$-th channel.
\item Polar $2n$-complex PCP: the $n$-channel audio is represented using $(\real\mathbf{X}_1)e_0+(\imag\mathbf{X}_1)e_1+\ldots+(\real\mathbf{X}_n)e_{2n-2}+(\imag\mathbf{X}_n)e_{2n-1}$, where $\mathbf{X}_i$ contains the complex spectrogram for the $i$-th channel.
\item Quaternionic PCP (if applicable) \cite{Chan16}: the two-channel audio is represented using $\mathbf{X}_1+\mathbf{X}_2\jmath$, where $\mathbf{X}_i$ contains the complex spectrogram for the $i$-th channel.
\item Tensor RPCA \cite{Zhang14}: the same spectrograms are represented by complex matrices of tubes. The tensor RPCA is used, which is defined by:
\begin{equation}
\label{eq:50}
\min_{\mathcal{L},\mathcal{S}}\|\mathcal{L}\|_{TNN}+\lambda\|\mathcal{S}\|_{1,1,2}\mbox{\quad s.t.\quad}\mathcal{X}=\mathcal{L}+\mathcal{S},
\end{equation}
where $\|\mathcal{L}\|_{TNN}$ is defined as the sum of the singular values of all frontal slices of $\mathcal{\hat{L}}$ (obtained by a Fourier transform along each tube) and $\|\mathcal{S}\|_{1,1,2}$ is defined by $\sum_{i,k}\|\mathcal{S}_{ik:}\|_F$ \cite{Zhang14}. To facilitate comparison with polar $n$-bicomplex PCP, we retrofit \eqref{eq:50} into our framework:
\begin{equation}
\label{eq:51}
\min_{\mathbf{L},\mathbf{S}}\|\mathrm{cft}(\mathbf{L})\|_*+\lambda\|\mathbf{S}\|_1\mbox{\quad s.t.\quad}\mathbf{X}=\mathbf{L}+\mathbf{S},
\end{equation}
where $\mathbf{X}\in\mathbb{K}_n^{l \times m}$ is the input. Our optimized implementation is shown in Algorithm~\ref{alg:2c}, where all calculations are done in the frequency domain.
\end{enumerate}

\begin{algorithm}[t]
\caption{Optimized Tensor RPCA (cf.~\cite{Zhang14})}
\begin{algorithmic}[1]
\label{alg:2c}
\REQUIRE $\mathbf{X}\in F^{l\times m},F\in\{\mathbb{K}_n,\mathbb{CK}_n\}$, $\lambda\in\mathbb{R}$, $\boldsymbol{\mu}\in\mathbb{R}^\infty$
\ENSURE $\mathbf{L}$, $\mathbf{S}$
\STATE Let $\mathbf{\hat{S}}=0$, $\mathbf{Y}=\mathbf{X}/\max\left(\|\mathbf{X}\|_2,\lambda^{-1}\|\mathbf{X}\|_\infty\right)$, $k=1$
\STATE ${\mathbf{\hat{X}}}\leftarrow\rm{fft}(\mathbf{X},n,3)$ \COMMENT{Applies $n$-point DFT to each tube.}
\STATE ${\mathbf{\hat{Y}}}\leftarrow\rm{fft}(\mathbf{Y},n,3)$
\WHILE{not converged}
    \STATE $\mathbf{\hat{Z}}\gets\mathbf{\hat{X}}-\mathbf{\hat{S}}+\mu_k^{-1}\mathbf{\hat{Y}}$
    \FOR{$i=1:n$}
        \STATE $[\mathbf{\hat{U}}_{::i},\mathbf{\hat{\Sigma}},\mathbf{\hat{V}}_{::i}]\leftarrow\mathrm{svd}(\mathbf{\hat{Z}}_{::i})$
        \STATE $\mathbf{\hat{\Sigma}}\gets\mathcal{S}_{1/\mu_k}[\mathbf{\hat{\Sigma}}]$
        \STATE $\mathbf{\hat{L}}_{::i}=\mathbf{\hat{U}}_{::i}\mathbf{\hat{\Sigma}}\mathbf{\hat{V}}_{::i}^*$
    \ENDFOR
    \STATE $\mathbf{\hat{S}}\gets\mathrm{prox}_{\lambda\sqrt{n}/\mu_k\|\cdot\|_1}^F(\mathbf{\hat{X}}-\mathbf{\hat{L}}+\mu_k^{-1}\mathbf{\hat{Y}})$
    \STATE $\mathbf{\hat{Y}}\gets\mathbf{\hat{Y}}+\mu_k(\mathbf{\hat{X}}-\mathbf{\hat{L}}-\mathbf{\hat{S}})$
    \STATE $k\gets k+1$
\ENDWHILE
\STATE $\mathbf{L}\leftarrow\mathrm{ifft}(\mathbf{\hat{L}},n,3)$
\STATE $\mathbf{S}\leftarrow\mathrm{ifft}(\mathbf{\hat{S}},n,3)$
\end{algorithmic}
\end{algorithm}

\subsection{Datasets}

The following datasets will be used:
\begin{enumerate}
\item The MSD100 dataset from the 2015 Signal Separation Evaluation Campaign (SiSEC).\footnote{\url{http://corpus-search.nii.ac.jp/sisec/2015/MUS/MSD100_2.zip}} The dataset is composed of 100 full stereo songs of different styles and includes the synthesized mixtures and the original sources of voice and instrumental accompaniment. To reduce computations, we use only 30-second fragments (1'45'' to 2'15'') clipped from each song, which is the only period where all 100 songs contain vocals. The MSD100 songs are divided into 50 development songs and 50 test songs, but SiSEC requires testing to be done on both sets. We will follow their convention here.
\item The Single- and Multichannel Audio Recordings Database (SMARD).\footnote{\url{http://www.smard.es.aau.dk/}} This dataset contains 48 measurement configurations with 20 audio recordings each \cite{Nielsen14}. SMARD configurations consist of four digits ($ABCD$): $A$ denotes the loudspeaker equipment used, $B$ denotes loudspeaker location, $C$ denotes microphone type, and $D$ denotes microphone array locations. To simulate real life recordings, we require that voice and music come from different point sources, that is $B=0$ for voice and $1$ for music or vice versa. Secondly, we require $C=2$ for circular microphone arrays, because they are better for spatial surround audio recording. Further we choose the first circular array which is closest to the sources, which gives us six audio channels. Finally, we require voice and music to have the same $A$ and $D$ so it makes sense to mix the signals. For each chosen configuration, we mix the first 30 seconds of soprano with the first 30 seconds of each of the music signals (clarinet, trumpet, xylophone, ABBA, bass flute, guitar, violin) at 0 dB signal-to-noise ratio. For soprano, we pad zero until it reaches 30 seconds; for music, we loop it until it reaches 30 seconds. This creates a repeating music accompaniment mixed with sparser vocals. We single out two configurations as the training set (music from 2020 with soprano from 2120, music from 2021 with soprano from 2121), while using the remaining 10 configurations for testing.

\end{enumerate}
For both datasets, we downsample the songs to 22\,050 Hz to reduce memory usage, then we use a short-time Fourier transform (STFT) with a 1\,411-point Hann window with 75\% overlap as in \cite{Chan15}. 

\subsection{Parameters and Evaluation}

Following \cite{Lin09}, the convergence criteria is $\|\mathbf{X}-\mathbf{L}_k-\mathbf{S}_k\|_F/\|\mathbf{X}\|_F<10^{-7}$, and $\boldsymbol{\mu}$ is defined by $\mu_0=1.25/\|\mathbf{X}\|_2$ and $\mu_{k+1}=1.5\mu_k$. The value of $c$ is determined by a grid search on the training set and is found to be 3 for SiSEC and 2 for SMARD (1 for SMARD with tensor RPCA).

The quality of separation will be assessed by BSS Eval toolbox version 3.0\footnote{\url{http://bass-db.gforge.inria.fr/}} in terms of signal-to-distortion ratio (SDR), source-image-to-spatial-distortion ratio (ISR), source-to-interference ratio (SIR), and sources-to-artifacts ratio (SAR), for the vocal and the instrumental parts, respectively \cite{Vincent12}. BSS Eval decomposes each estimated source $h$ into four components (assuming that the admissible distortion is a time-invariant filter \cite{Vincent06}):
\begin{equation}
\hat{s}_h=s_h^\mathrm{true}+e_h^\mathrm{spat}+e_h^\mathrm{interf}+e_h^\mathrm{artif},
\end{equation}
where $\hat{s}$ is the estimated source, $s^\mathrm{true}$ is the true source, $e^\mathrm{spat}$ is the spatial distortion for multi-channel signals, $e^\mathrm{interf}$ is the interference from other sources, and $e^\mathrm{artif}$ is the artifacts of the source separation algorithm such as musical noise. The metrics are then computed as follows \cite{Vincent12}:
\begin{gather}
\mathrm{SDR}_h=20\log_{10}\frac{\|s_h^\mathrm{true}\|}{\|\hat{s}_h-s_h^\mathrm{true}\|},\\
\mathrm{ISR}_h=20\log_{10}\frac{\|s_h^\mathrm{true}\|}{\|e_h^\mathrm{spat}\|},\\
\mathrm{SIR}_h=20\log_{10}\frac{\|s_h^\mathrm{true}+e_h^\mathrm{spat}\|}{\|e_h^\mathrm{interf}\|},\\
\mathrm{SAR}_h=20\log_{10}\frac{\|\hat{s}_h-e_h^\mathrm{artif}\|}{\|e_h^\mathrm{artif}\|}.
\end{gather}
All these measures are energy ratios expressed in decibels. Higher values indicate better separation quality. During parameter tuning, $h$ is dropped and the measures are averaged over all sources. From SDR we also calculate the normalized SDR (NSDR) by computing the improvement in SDR using the mixture itself as the baseline \cite{Hsu10}. We compute these measures for each song and then report the average result (denoted by the G prefix) for both the instrumental (L) and vocal (S) parts. The most important metric is GNSDR which measures the overall improvement in source separation performance.

\subsection{Results}

The results for the MSD100 dataset are shown in Table~\ref{tab:2}. The best results are highlighted in bold. Broadly speaking, polar $2$-bicomplex PCP has the highest GNSDR in both L and S, followed by polar $4$-complex PCP. Both are also slightly better than tensor RPCA on all other performance measures except GISR and GSAR in L. Overall, the result for L is better than S because the instruments in this dataset are usually louder than the vocals (as reflected by the GSDR). It can be observed that the GNSDR for polar $n$-(bi)complex PCP are not inferior to that of quaternionic PCP, suggesting that they are good candidates for PCP with four-dimensional signals.

For the SMARD dataset, the results are presented in Table~\ref{tab:2b}. Both of our proposed algorithms are equally competitive, and both clearly outperform tensor RPCA in terms of GNSDR, GSDR, and GSIR. When we break down the results by configuration, we find that polar $n$-(bi)complex PCP are better than tensor RPCA in 8 out of 10 configurations.

\begin{table}[!t]
\renewcommand{\arraystretch}{1.1}
\caption{Results for MSD100 instrumental (L) and vocal (S), in dB}
\label{tab:2}
\centering
\begin{tabular}{|c|l|c|c|c|c|c|}
\hline
\multicolumn{2}{|l|}{} & GNSDR & GSDR & GISR & GSIR & GSAR\\
\hline\hline
Polar $2$-bi- & L & \textbf{5.01} & \textbf{10.36} & 19.22 & 10.68 & 23.57\\
complex PCP & S & \textbf{3.20} & \textbf {-1.33} & 2.63 & \textbf{9.02} & 0.44\\
\hline
Polar $4$- & L & 5.00 & 10.35 & 19.19 & 10.67 & 23.59\\
complex PCP & S & 3.18 & -1.35 & 2.62 & 9.00 & 0.43\\
\hline
Quaternionic & L & 5.00 & 10.35 & 18.91 & \textbf{10.71} & 23.25\\
PCP & S & 3.15 & -1.38 & \textbf{2.75} & 8.32 & \textbf{0.57}\\
\hline
\multirow{2}{*}{Tensor RPCA} & L & 4.78 & 10.12 & \textbf{22.80} & 10.13 & \textbf{26.03}\\
& S & 2.91 & -1.62 & 1.32 & 8.53 & -0.64\\
\hline
\end{tabular}
\end{table}

\begin{table}[!t]
\renewcommand{\arraystretch}{1.1}
\caption{Results for SMARD instrumental (L) and vocal (S), in dB}
\label{tab:2b}
\centering
\begin{tabular}{|c|l|c|c|c|c|c|}
\hline
\multicolumn{2}{|l|}{} & GNSDR & GSDR & GISR & GSIR & GSAR\\
\hline\hline
Polar $6$-bi- & L & 2.20 & 5.35 & \textbf{11.53} & \textbf{7.63} & 15.47\\
complex PCP & S & \textbf{2.37} & 2.83 & 5.82 & 7.31 & 12.49\\
\hline
Polar $12$- & L & \textbf{2.21} & \textbf{5.36} & 11.51 & \textbf{7.63} & 15.46\\
complex PCP & S & \textbf{2.37} & \textbf{2.84} & 5.82 & \textbf{7.32} & 12.50\\
\hline
\multirow{2}{*}{Tensor RPCA} & L & 1.42 & 4.57 & 9.55 & 6.83 & \textbf{16.65}\\
& S & 1.58 & 2.05 & \textbf{5.85} & 3.06 & \textbf{14.20}\\
\hline
\end{tabular}
\end{table}

\section{Discussion and Conclusion}
\label{sec:conc}

We believe that we have demonstrated the superiority of our proposed hypercomplex algorithms. Theoretically, the tensor RPCA \cite{Zhang14} is computing the nuclear norm in the CFT space \eqref{eq:51}, which is probably due to an erroneous belief that the CFT is unitary and thus does not change anything \cite{Semerci14}. However, as t-SVD is based on the circulant algebra, where the singular values are also circulants, the two trace norms are not equivalent. As a result, we should not have omitted the ICFT, as tensor RPCA does. This omission is difficult to detect because tensors themselves do not have enough algebraic structures to guide us. In contrast, our formulation includes both the CFT and ICFT steps while computing the SVD of a polar $n$-bicomplex matrix, as described in the paragraph after \eqref{eq:28}, which does not violate the underlying circulant algebra. This observation hints at a new role for hypercomplex algebras---to provide additional algebraic structures that serve as a new foundation for tensor factorization. By way of example, let us consider Olariu's other work, the planar $n$-complex numbers, which have a skew-circulant representation \cite{Olariu02}. As skew circulants are diagonalizable by the skew DFT,\footnote{The skew DFT of $[a_0,a_1,\ldots,a_{n-1}]^T$ is $[A_0,A_1,\ldots,A_{n-1}]^T$ where $A_k=\sum_{i=0}^{n-1}a_ie^{-\pi ij(2k+1)/n}$ for $k=0,1,\ldots,n-1$ \cite{Good86}.} a new kind of t-SVD can be derived easily (see Algorithm \ref{alg:3}). Here $\mathrm{sft}$ and $\mathrm{isft}$ stands for skew DFT and inverse skew DFT, respectively.

\begin{algorithm}
\caption{t-SVD with a Skew-Circulant Representation}
\begin{algorithmic}[1]
\label{alg:3}
\REQUIRE $\mathcal{X}\in\mathbb{C}^{l\times m\times n}$
\ENSURE $\mathcal{U}$, $\mathcal{S}$, $\mathcal{V}$
\STATE ${\mathcal{\hat{X}}}\leftarrow\rm{sft}(\mathcal{X},n,3)$\qquad\COMMENT{We use the skew DFT instead.}
\FOR{$i=1:n$}
    \STATE $[\mathbf{\hat{U}}_{::i},\mathbf{\hat{S}}_{::i},\mathbf{\hat{V}}_{::i}]\leftarrow\mathrm{svd}(\mathbf{\hat{X}}_{::i})$
\ENDFOR
\STATE $\mathcal{U}\leftarrow\mathrm{isft}(\mathcal{\hat{U}},n,3);\ \mathcal{S}\leftarrow\mathrm{isft}(\mathcal{\hat{S}},n,3);\ \mathcal{V}\leftarrow\mathrm{isft}(\mathcal{\hat{V}},n,3)$
\end{algorithmic}
\end{algorithm}

What is more, the above procedure can be trivially extended to any commutative group algebras,\footnote{Hypercomplex algebras where the real and imaginary units obey the commutative group axioms including associativity, commutativity, identity, and invertibility.} since the matrix representation of a commutative group algebra is diagonalizable by the DFT matrix for the algebra \cite{Clausen93}, viz.~$\mathbf{F}_{n_1}\otimes\cdots\otimes\mathbf{F}_{n_m}$ where $n_1$ to $n_{m}$ can be uniquely determined \cite{Apple70}. In other words, we get the commutative group algebraic t-SVD simply by reinterpreting $\mathrm{fft}$ and $\mathrm{ifft}$ in Algorithm~\ref{alg:1} according to the algebra's DFT matrix, for which fast algorithms are available \cite{Apple70}. Going even further, we conjecture that the most fruitful results for hypercomplex SVD may originate from regular semigroup algebras (i.e., by relaxing the group axioms of identity and invertibility to that of pseudoinvertibility \cite{Kilp00}). By doing so, we gain a much larger modeling space (see Table~\ref{tab:4}) which may be desirable for data fitting applications. At present, harmonic analysis on semigroups \cite{Berg84} is still relatively unexplored in tensor signal processing.

Regarding the hyperbolic numbers and tessarines that Alfsmann has recommended, we find that both of them share the same circulant representation \cite{Alfsmann06}:
\begin{equation}
\begin{bmatrix}
a_0 & a_1\\
a_1 &a_0
\end{bmatrix},
\end{equation}
where $a_0,a_1\in\mathbb{R}$ for hyperbolic numbers and $a_o,a_1\in\mathbb{C}$ for the tessarines. Thus, the hyperbolic numbers are isomorphic to $\mathbb{K}_2$ whereas the tessarines are isomorphic to $\mathbb{CK}_2$. Interestingly, the seminal paper on tessarine SVD \cite{Pei08} has advocated the $e_1-e_2$ form to simplify computations, where they transform the inputs with $(a_0,a_1)\mapsto(a_0+a_1,a_0-a_1)$, perform the SVDs, then transform the outputs back with $(A_0,A_0)\mapsto((A_0+A_1)/2,(A_0-A_1)/2)$. If we look closely, these are actually Fourier transform pairs (as used in Algorithm~\ref{alg:1}), hence the tessarine SVD can be considered as a special case of t-SVD when $n=2$. It can also be observed that, when $n=1$, the polar $n$-complex and polar $n$-bicomplex PCP degenerate into the real and complex PCP, respectively. It should be emphasized that the complex numbers are not in $\mathbb{K}_n$, therefore we have introduced $\mathbb{CK}_n$ for algebraic closure, and its importance has been confirmed by numerical simulations. We further note that the two families of $2^N$-dimensional hypercomplex numbers introduced by Alfsmann \cite{Alfsmann06} are also commutative group algebras diagonalizable by the Walsh-Hadamard transform matrices $\mathbf{F}_2\otimes\cdots\otimes\mathbf{F}_2$ \cite{Clausen93,Alfsmann06}.

To conclude, we have extended the PCP to the polar $n$-complex and $n$-bicomplex algebras, with good results. Both algebras are representationally compact (does not require $2^N$ dimensions) and are computationally efficient in Fourier space. We have found it beneficial to incorporate hypercomplex algebraic structures while defining the trace norm. More concretely, we have proven an extended von Neumann theorem, together with an adaptation of the group lasso, which in concert enable us to formulate and solve the hypercomplex PCP problem. In doing so, we are able to incorporate the correct algebraic structures into the objective function itself. We have demonstrated that the hypercomplex approach is useful because it can: 1) inform t-SVD-related algorithms by imposing relevant algebraic structures; and 2) generate new families of t-SVD's beyond the circulant algebra. We have also established that tessarine SVD is a special case of t-SVD, and that the $2^N$-hypercomplex family of Alfsmann is amenable to a straightforward extension of t-SVD which we call the commutative group algebraic t-SVD. Having formulated the first proper PCP algorithm on cyclic algebras, we would recommend more crossover attempts between the hypercomplex and tensor-based approaches for future work.

\begin{table}[!t]
\caption{Number of Distinct (Semi)groups of Orders up to 9. From the On-Line Encyclopedia of Integer Sequences, http://oeis.org/A000688 and http://oeis.org/A001427}
\label{tab:4}
\centering
\begin{tabular}{|c|c|r|}
\hline
Order & Number of Com- & Number of Reg-\\
$n$ & mutative Groups & ular Semigroups\\
\hline\hline
1 & 1 & 1\\
2 & 1 & 3\\
3 & 1 & 9\\
4 & 2 & 42\\
5 & 1 & 206\\
6 & 1 & 1\,352\\
7 & 1 & 10\,168\\
8 & 3 & 91\,073\\
9 & 2 & 925\,044\\
\hline
\end{tabular}
\end{table}

\section*{Acknowledgment}

The authors would like to thank the anonymous reviewers for their numerous helpful suggestions.

\bibliographystyle{IEEEtran}
\bibliography{chan16tsp}

\begin{IEEEbiography}[{\includegraphics[width=1in,height=1.25in,clip,keepaspectratio]{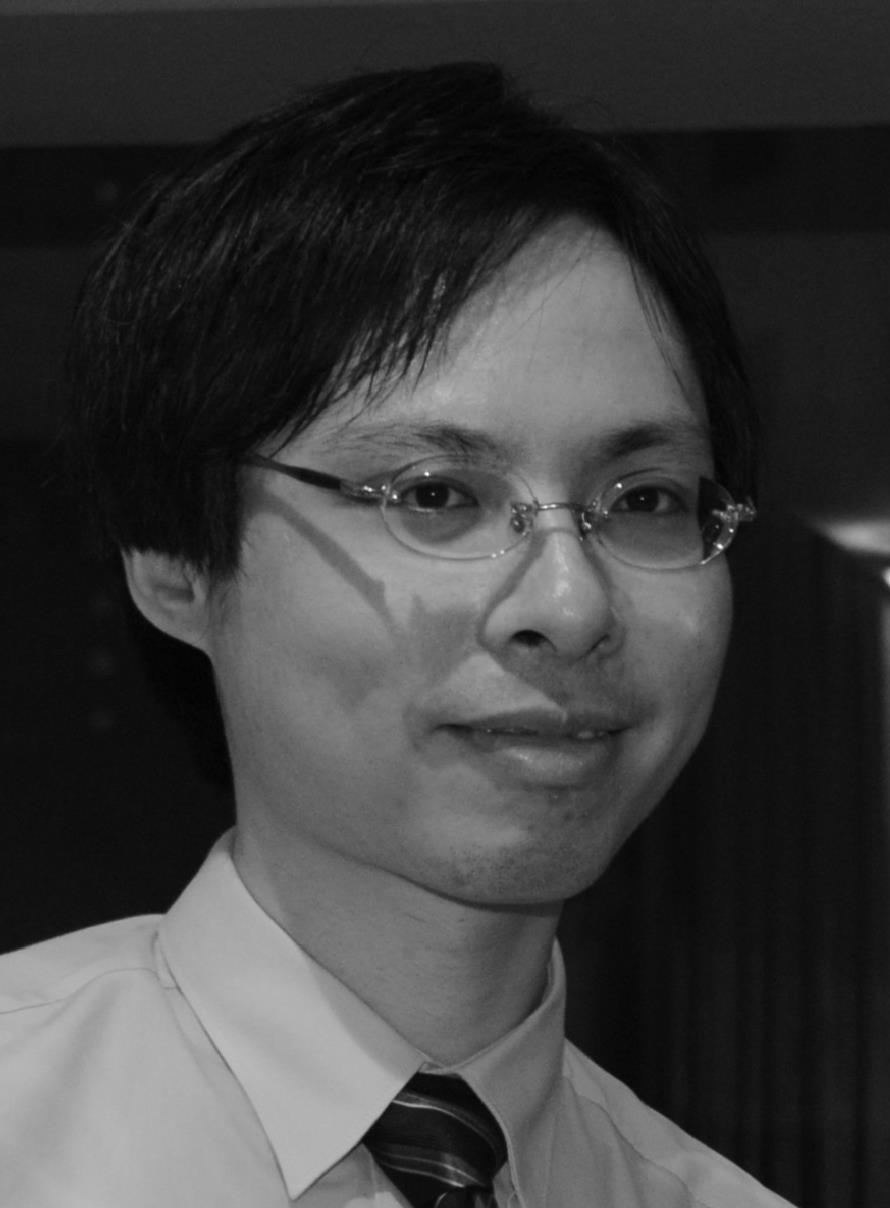}}]{Tak-Shing~T.~Chan}
(M'15) received the Ph.D. degree from the University of London in 2008. From 2006 to 2008, he was a Scientific Programmer at the University of Sheffield. In 2011, he worked as a Research Associate at the Hong Kong Polytechnic University. He is currently a Postdoctoral Fellow at the Academia Sinica, Taiwan. His research interests include signal processing, cognitive informatics, distributed computing, pattern recognition, and hypercomplex analysis.
\end{IEEEbiography}

\begin{IEEEbiography}[{\includegraphics[width=1in,height=1.25in,clip,keepaspectratio]{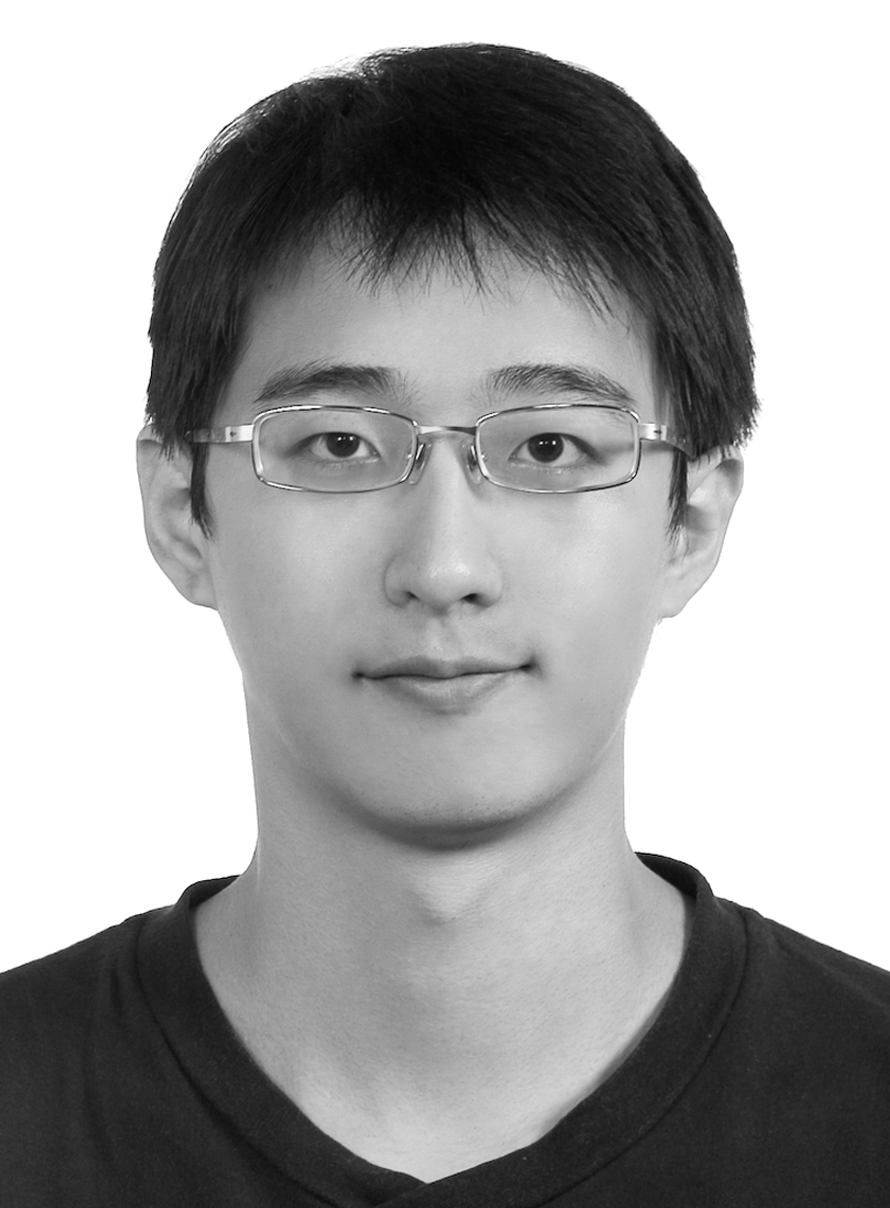}}]{Yi-Hsuan~Yang}
(M'11) received the Ph.D. degree in communication engineering from National Taiwan University in 2010. Since 2011, he has been affiliated with Academia Sinica as an Assistant Research Fellow. He is also an Adjunct Assistant Professor with the National Cheng Kung University. His research interests include music information retrieval, machine learning and affective computing. Dr. Yang was a recipient of the 2011 IEEE Signal Processing Society (SPS) Young Author Best Paper Award, the 2012 ACM Multimedia Grand Challenge First Prize, the 2014 Ta-You Wu Memorial Research Award of the Ministry of Science and Technology, Taiwan, and the 2014 IEEE ICME Best Paper Award. He is an author of the book \emph{Music Emotion Recognition} (CRC Press 2011) and a tutorial speaker on music affect recognition in the International Society for Music Information Retrieval Conference (ISMIR 2012). In 2014, he serve as a Technical Program Co-Chair of ISMIR, and a Guest Editor of the \textsc{IEEE Transactions on Affective Computing}, and the ACM Transactions on Intelligent Systems and Technology.
\end{IEEEbiography}

\end{document}